\pgfplotsset{compat=1.18}
\DeclareMathOperator{\K}{K}
\DeclareMathOperator{\E}{E}
\newcommand{\tint}{\mathop{\scalebox{1.2}{$\displaystyle\int$}}\limits}
\newtheorem{theorem}{Theorem}[section]
\newtheorem{lemma}{Lemma}[section]
\newtheorem{remark}{Remark}[section]
\numberwithin{equation}{section}
\title{Dynamics of Dissipative Nonlinear Systems: A Study via 2D CGLE by Contact Geometry}
\author[1,2]{D.Y. Zhong}
\author[2]{G. Q. Wang}
\affil[1]{State Key Laboratory of Hydroscience and Engineering, Tsinghua University, Beijing 10084, China }
\affil[2]{Department of Hydraulic Engineering, Tsinghua University, Beijing 10084, China }
\date{}
\begin{document}

\maketitle

\begin{abstract}
We develop a contact-geometric framework for dissipative nonlinear field theories by extending the least constraint theorem to complex fields and establishing a rigorous link with probability measures. The Complex Ginzburg-Landau Equation serves as a paradigmatic example, yielding a dissipative Contact Hamilton-Jacobi equation that governs the evolution of the action functional. Through canonical transformation and travelling-wave reduction, exact Jacobi elliptic solutions are obtained, revealing a continuous transition from periodic periodons to localised solitons. Probabilistic analysis identifies a universal switching line separating dynamical regimes and uncovers a first-order periodon-soliton phase transition with a hysteresis loop. The conserved contact potential emerges as the key geometric quantity governing pattern formation in dissipative media, analogous to energy in conservative systems.

{\noindent\textbf{Keywords}:  Contact Geometry;  Complex Ginzburg-Landau Equation (CGLE); Dissipative Systems; Periodon-Soliton Transition; Probability Measure}

\end{abstract}

\tableofcontents
\listoffigures
\listoftheorems
\listoftables

\newpage
\section{Introduction}\label{sec:introduction}

Dissipative nonlinear equations describe dynamical behaviours prevalent in nonequilibrium systems, characterised by the ability of the system to maintain ordered spatiotemporal structures through the synergistic coupling of external driving, energy dissipation, and nonlinear interactions \cite{Mori1998}. This behaviour is distinct from energy-conserving linear dynamics in conservative systems and from disordered processes dominated by pure dissipation \cite{DESCALZI2025, Cross1993}. Such phenomena permeate multiscale physical scenarios from the macroscopic to the microscopic, including convective patterns in fluid media \cite{Walter1993, Khater_2006}, coherent light propagation in nonlinear optical systems \cite{Malomed2021}, topological excitations in condensed matter systems \cite{Jeffrey2017}, and pattern development in biological tissue.

Constructing a unified theoretical framework to analyse such dynamical mechanisms has become a key topic in nonlinear dynamics and mathematical physics. The Complex Ginzburg-Landau Equation (CGLE) \cite{Aranson2002}, as a foundational model for describing nonequilibrium nonlinear systems, can precisely characterize the balance between driving forces (e.g., thermal buoyancy in convection, gain in optical systems) and dissipation (e.g., viscous dissipation, light absorption) \cite{Kengne2021}, while capturing the phase transition from homogeneous steady states to ordered periodic patterns \cite{Tsoy2006, Blum2025, Richardson1972, KENGNE20221, Scheel2003, Yomba2024, KUMAR2025}. Its universality makes it a core tool connecting microscopic nonlinear dynamics to macroscopic observable dissipative nonlinear wave phenomena \cite{Hoyle1998, Gorder2024}.

Despite the CGLE’s essential role and broad applicability, research into its nonlinear wave dynamics remains an challenge  \cite{Akhmediev1997, Rezazadeh2018, Akhmediev2001, Akhmediev1995}. Most existing methods rely on specific analytical techniques or system-specific numerical methods as discussed in \cite{MONTAGNE1996, MANCAS2006}. For instance, some studies tailor approximations to specific parameter regimes (e.g., weak nonlinearity or limited dissipation). In contrast, others focus on individual solution types (e.g., periodic waves or solitons) without linking them within a broader framework. This lack of a unified theoretical foundation not only limits the generalisability of conclusions from individual studies but also hinders systematic exploration of how different pattern types evolve into one another, a crucial gap that emphasises the need for a more cohesive analytical framework \cite{Aranson2002}.

For decades, one of the most significant ideas in the study of dynamics across a variety of disciplines has been the application of geometric methods to analyse complex spatiotemporal dynamics \cite{Abraham1978, anold2010,ArnoldVI1989}. Many theoretical developments have shown that contact geometry provides a powerful framework for understanding dissipative structures in a unified manner \cite{Bravetti2017, Zhong_2024, Gaset2020, Wang2025, ArnoldVI1989,Geiges2008}. In this paper, we develop a contact-geometric formulation of the CGLE, building on our previously reported study of the contact formulation of vector bundles \cite{ZHONG2025117143}, yielding a dissipative contact Hamilton-Jacobi-like equation that governs the evolution of dissipative structures. Our approach extends the classical Hamilton-Jacobi theory to dissipative systems by using contact geometry, which naturally accounts for energy dissipation. This formulation provides a framework for analysing the stability and dynamics of pattern-forming systems.

The main contributions of this work are as follows: (1) a contact-geometric formulation of dissipative nonlinear systems, (2) a dissipative contact Hamilton-Jacobi equation governing the evolution of the action functional, and (3) a probability measure within the contact-geometric framework, through which we analyse the first-order periodon--soliton phase-transition boundary and hysteresis loop in the contact-CGLE framework. This study not only enhances the theoretical foundation of pattern formation in dissipative systems but also provides a new analytical tool for studying complex spatiotemporal dynamics in various physical and biological systems.

The paper is organised as follows: Section \ref{sec:contact-geometric-formulation} develops the contact-geometric formulation of a dissipative nonlinear system, extending the Least Constraint Theorem for vector bundles, deriving the contact dynamical equations, and establishing the probability density functional. Section \ref{sec:Contact-HJE} derives the dissipative contact Hamilton-Jacobi (CHJ) equation by generalising classical Hamilton-Jacobi theory to dissipative systems via contact geometry, introducing a canonical transformation to reduce 2D field dynamics to 1D problems, and formulating the equation. Section \ref{sec:Traveling-Wave-Solutions} discusses travelling-wave solutions of the CGLE by reducing the CHJ equation, obtaining solutions in Jacobi elliptic functions, analysing parameter conditions, and demonstrating the transition from periodic waves to solitons as the elliptic modulus approaches unity, supported by detailed visualisations of the spatial distribution of field properties. Section \ref{sec:Contact-Probability-Functional} investigates the probabilistic properties of the 2D CGLE, deriving the probability density functional, defining the critical switching line, and exploring the periodon–soliton phase transition and hysteresis. Finally, Section \ref{sec:Conclusions} summarises the main contributions and the broader significance of the contact-geometric approach for nonlinear pattern formation in nonequilibrium systems.

\section{Contact Geometric Formulation for Complex Fields}\label{sec:contact-geometric-formulation}

\subsection{Least Constraint Theorem and Complex Field Extension}
\label{sec:supplementary_theory}

Contact geometry is a mathematical framework for addressing challenges in classical Hamiltonian mechanics applied to dissipative systems in which energy is not conserved \cite{ArnoldVI1989, anold2010}. Unlike symplectic geometry, which applies to conservative systems \cite{Abraham1978}, contact geometry employs a contact manifold defined by a contact 1-form $\Theta$ that satisfies the non-degeneracy condition $\Theta\wedge (\dd\Theta)^{\wedge n}\neq 0$ \cite{ArnoldVI1989, anold2010}. This geometric structure naturally incorporates dissipation by allowing the system to evolve along extremal paths of a constraint function, rather than along conservative energy trajectories \cite{ZHONG2025117143}. 

\subsubsection{Least Constraint Theorem for Real Vector Bundles}

The core theoretical underpinning of contact geometry for vector bundles is its concluding theorem \cite{ZHONG2025117143}, which formalises the intrinsic contact geometric structure and the extremal evolution of stochastic dissipative systems. It reads:

\begin{theorem}[Least Constraint for Vector Bundles]
\label{thm:lest-constraint}
Let $E\subset \mathbb{R}^n$ be an $n$-dimensional vector space and $\mathbb{P}$ be the space of probability measures about ${E}$. The infinite-order stochastic jet bundle $\pi_{E,0}^{\infty}: J^{\infty}(E,\mathbb{P}) \to E$ satisfies the following properties:
\begin{enumerate}
    \item It admits a natural contact manifold $(\mathcal{E}, \Theta)$ defined by the contact 1-form 
    \begin{equation}
    \Theta = \dd P - \wp_i \dd y^i = \mathcal{H} \dd t - \wp_i \dd y^i,
    \end{equation}
where $\mathcal{E}=T^*E\times \mathbb{R}$; $P \in \mathbb{P}$ is the probability measure; $y \in E$, $\wp_i=B_i^{\mu_1\cdots\mu_{\infty}}P_{\mu_1\cdots\mu_{\infty}}$, with $P_{\mu_1\cdots\mu_{\infty}}=\partial_{\mu_1\cdots\mu_{\infty}}P$ and $B_i^{\mu_1\cdots\mu_{\infty}}$ the coefficients, is the probability flux field, such that $(y^i, \wp_i, t) \in \mathcal{E}$. $\mathcal{H} = \dd P(\dot{y}) = \dot{y}\partial_y P$ is contact probability potential. The volume form $\Theta \wedge (\dd\Theta)^{\wedge n}$ is non-degenerate, confirming a valid contact structure on $J^{\infty}(E,\mathbb{P})$.
    
    \item There exists a unique vector field $\mathfrak{X}(\mathcal{E}) = \{X_{\mathcal{H}} \mid X_{\mathcal{H}} \in \ker(\dd\Theta) \subset T\mathcal{E}\}$ such that:
    \begin{equation}
    \iota_{X_{\mathcal{H}}} \Theta = -\varepsilon, \quad \iota_{X_{\mathcal{H}}} \dd \Theta = 0,
    \end{equation}
    where $\varepsilon = \wp_i \dot{y}^i - \dd P(\dot{y})$ constraint function. The Lie derivative $L_{X_{\mathcal{H}}} \Theta = 0$ ensures the contact structure is preserved along the flow generated by $\mathfrak{X}(\mathcal{E})$.
    
    \item The flow of $\mathfrak{X}(\mathcal{E})$ corresponds to extremal paths of the action of $\varepsilon$, defined as
    \begin{equation}
    \mathcal{S} = \int \varepsilon\; \dd t = -\int \Theta.
    \end{equation}
    This implies the system evolves to extremize $\varepsilon$ while maximising the variation of probability $P$. Moreover, there is a Hmailton-Jacobi-like equation for $\mathcal{S}$:
    \begin{equation}
    \frac{\partial \mathcal{S}}{\partial t}+ \mathcal{H}\left[y, \frac{\partial S}{\partial y}, t\right] = 0.
    \end{equation}
\end{enumerate}
\end{theorem}

\subsubsection{Least Constraint Theorem for Complex Vector Bundles}
The above theorem is formulated for real vector bundles of $E \subset \mathbb{R}^n$, but the cases, for example, the systems of CGLE, describe complex amplitude field $W \in \mathbb{C}^n$. It is necessary to extend the contact geometric framework to complex vector fields. For the complex field $W$ and its canonically conjugate field $\Pi^* \in \mathbb{C}^n$ replacing the real local connection $\wp_i$, the contact 1-form $\Theta$ (Theorem \ref{thm:lest-constraint}, Property 1) is generalized to account for complex conjugacy. As a result, the extended contact manifold $\mathcal{E}_{\mathbb{C}} = T^*{E}_{\mathbb{C}} \times \mathbb{R}$ with coordinates $(W^j, \Pi^*_j, t)$, and the contact 1-form:
\begin{equation}
\Theta_{\mathbb{C}} = \mathcal{H} \dd t - \Pi^*_{j} \, \dd W^{j},
\end{equation}
where $\mathcal{H}$ is the complex contact potential. The non-degeneracy of $\Theta_{\mathbb{C}} \wedge (\dd\Theta_{\mathbb{C}})^{\wedge n}$ is preserved, as $\dd W^{j}$ and $\dd\Pi^*_{j}$ are linearly independent complex differentials.

The real constraint function $\varepsilon$ (Theorem \ref{thm:lest-constraint}, Property 2) is extended to complex fields by enforcing hermiticity. The complex constraint function is:
\begin{equation}
\varepsilon_{\mathbb{C}} =  \Pi^*_{j} \dot{W}^{j}    - \mathcal{H}.
\end{equation}
Using the extended contact structure $\Theta_{\mathbb{C}}$ and constraint $\varepsilon_{\mathbb{C}}$, the contact dynamical equations (Theorem \ref{thm:lest-constraint}, Property 3) are derived via the Poisson bracket generalised to complex functionals. 
The above extension can be summarised as the following theorem:

\begin{theorem}[Least Constraint for Complex Vector Bundles]\label{thm:least-constraint-complex}

Let $E_{\mathbb{C}}$ be an infinite-dimensional function space with values in $\mathbb{C}^n$, and let $\mathbb{P}$ be the space of probability measures. The infinite-order stochastic jet bundle $\pi_{E_{\mathbb{C}},0}^{\infty}: J^{\infty}(E_{\mathbb{C}},\mathbb{P}) \to E_{\mathbb{C}}$ satisfies the following properties:

\begin{enumerate}[label=(\arabic*)]
    \item \textbf{Contact structure:} There exists a natural contact manifold $(\mathcal{E}_{\mathbb{C}},\Theta_{\mathbb{C}})$ on $J^{\infty}(E_{\mathbb{C}},\mathbb{P})$, defined by the contact 1-form
    \begin{equation}
    \Theta_{\mathbb{C}} = \mathcal{H}\,\dd t -  \Pi^*_{j}(\mathbf{r},t) \, \dd  W^{j}(\mathbf{r},t),
    \end{equation}
    where
    \begin{enumerate}[label=(\alph*)]
    	\item $\mathcal{E}_{\mathbb{C}} = T^*{E}_{\mathbb{C}} \times \mathbb{R}$ is extended complex phase space,
        \item $W^j(\mathbf{r},t)$ is the complex field as a section of the complex vector bundle,
        \item $\Pi^*_j(\mathbf{r},t)= B_j^{\mu_1\cdots\mu_{\infty}}P_{\mu_1\cdots\mu_{\infty}}$ is canonically conjugate field of $W^j$, and $P_{\mu_1\cdots\mu_k}=\partial_{\mu_1\cdots\mu_{k}}P$ ($k=1,\cdots, \infty$),
        \item $\mathcal{H} = \mathcal{H}[W,\Pi^*,t]$ is the complex contact potential.
    \end{enumerate}
    The corresponding {contact volume form} is defined within the field-theoretic framework as:
\begin{equation}
    \Omega_{\mathbb{C}} = \Theta_{\mathbb{C}} \wedge (\dd \Theta_{\mathbb{C}})^{\wedge n},
\end{equation}
where
\begin{equation}
    n = \frac{1}{2} \operatorname{rank}\big(\dd \Theta_{\mathbb{C}}|_{\ker \Theta_{\mathbb{C}}}\big)
\end{equation}
is half the dimension of the transverse symplectic structure of the contact form.
    \item \textbf{Unique vector field:} There exists a unique vector field $\mathfrak{X}(\mathcal{E}_{\mathbb{C}}) = \{X_{\mathcal{H}} \mid X_{\mathcal{H}} \in \ker(\dd \Theta_{\mathbb{C}}) \subset T\mathcal{E}_{\mathbb{C}}\}$ such that for 
   \begin{equation}
\label{eq:vector}
X_{\mathcal{H}}=\frac{\partial}{\partial t} + \dot{W^i}\frac{\partial}{\partial W^i} + \dot{\Pi}_i^*\frac{\partial}{\partial \Pi_i^*},
\end{equation}
it has that    
    \begin{equation}
    \iota_{X_{\mathcal{H}}}\Theta_{\mathbb{C}} = -\varepsilon_{\mathbb{C}},\qquad \iota_{X_{\mathcal{H}}}\dd \Theta_{\mathbb{C}} = 0,
    \end{equation}
where $\varepsilon_{\mathbb{C}} =  \Pi^*_{j}\dot{W}^{j} - \mathcal{H}$ is a smooth complex constraint functional and satisfied that $\dd \varepsilon_{\mathbb{C}}=0$. The Lie derivative $L_{X_{\mathcal{H}}}\Theta_{\mathbb{C}} = 0$ guarantees that the contact structure is preserved along the flow generated by $\mathfrak{X}(\mathcal{E}_{\mathbb{C}})$.

    \item \textbf{Extremal paths and the contact Hamilton–Jacobi equation:} The flow of $\mathfrak{X}(\mathcal{E}_{\mathbb{C}})$ corresponds to the extremal paths of the action functional of $\varepsilon_{\mathbb{C}}$, defined as
    \begin{equation}
    \mathcal{S} = \int \varepsilon_{\mathbb{C}}\,\dd t = -\int \Theta_{\mathbb{C}}.
    \end{equation}
    This implies that the system evolves to extremize $\varepsilon_{\mathbb{C}}$ while maximising the variation of the probability functional $P$. There exists a contact Hamilton–Jacobi-type equation:
    \begin{equation}
    \frac{\partial\mathcal{S}}{\partial t} + \mathcal{H}\!\left[W,\frac{\partial \mathcal{S}}{\partial  W},t\right] = 0,
    \end{equation}
where $\partial \mathcal{S}/\partial  W = \Pi^*$. Moreover, $\dd  \mathcal{S}=0$ results in dynamical equations:
\begin{subequations}\label{eq:contact-dynamics}
\begin{empheq}[left=\empheqlbrace]{align}
\frac{\partial W^{j}}{\partial t} &= \;\;\;\{W^{j}, \mathcal{H}\} =\;\; \frac{\partial  \mathcal{H}}{\partial  \Pi^*_{j}}, \\
\frac{\partial \Pi^*_{j}}{\partial t} &= \;\;\;\{\Pi^*_{j}, \mathcal{H}\} = -\frac{\partial  \mathcal{H}}{\partial  W^{j}},\\
\frac{\partial \varepsilon_{\mathbb{C}}}{\partial t} &  =\, -\{\varepsilon_{\mathbb{C}}, \mathcal{H}\}.
\end{empheq}
\end{subequations}

For any two functionals $F(W,\Pi^*)$ and $G(W,\Pi^*)$, the Poisson bracket in equation~\eqref{eq:contact-dynamics} is defined as:
\begin{equation}\label{eq:poisson-bracket}
\{F, G\} =   \frac{\partial  F}{\partial  W^{j}} \frac{\partial  G}{\partial  \Pi^*_{j}} - \frac{\partial  F}{\partial  \Pi^*_{j}} \frac{\partial  G}{\partial  W^{j}}.
\end{equation}
The complex field $W$ and its canonically conjugate local connection $\Pi^*$ form a pair of conjugate variables. Thus, 
\begin{equation}
\{W^{i}, \Pi^*_{j}\} = \delta^{i}_{j},
\end{equation}
where $\delta$ is the Dirac delta function. All other fundamental brackets vanish:
\begin{subequations}
\label{eq:canonical}
\begin{empheq}[left=\empheqlbrace]{align}
\{W^{i}, W^{i}\} &= 0, \\
\{\Pi^*_{i},\;\, \Pi^*_{i}\} &= 0.
\end{empheq}
\end{subequations}
With the bracket defined above, the time evolution of any functional $F(W,\Pi^*)$ is given by:
\begin{equation}\label{eq:dynamics-F}
\frac{\dd F}{\dd t} = \{F, \mathcal{H}\} + \frac{\partial F}{\partial t}.
\end{equation}
\end{enumerate}
\end{theorem}

\begin{proof}
Details are the same as those of Theorem~\ref{thm:lest-constraint} as reported in \cite{ZHONG2025117143}, with the extension to complex vector bundles. The key steps are outlined as follows:

\noindent\emph{Step 1: Complex Vector Bundle Setup:} 

Define the complex vector space $E_{\mathbb{C}}$ and its associated infinite-order stochastic jet bundle $J^{\infty}(E_{\mathbb{C}}, \mathbb{P})$, replacing the real vector space $E$ in the original theorem with complex-valued fields $W  \in E_{\mathbb{C}} \subset \mathbb{C}^n$ and their canonically conjugate fields $\Pi^*  \in T^*E_{\mathbb{C}}$.

\noindent\emph{Step 2: Complex Contact Manifold Construction:} Generalize the real contact 1-form to the complex case 
\begin{equation}
\Theta_{\mathbb{C}} = \mathcal{H}\dd t - \Pi_{j}^* \dd W^{j},
\end{equation}
Verify the non-degeneracy of the contact volume form 
\begin{equation}
\Omega_{\mathbb{C}} = \Theta_{\mathbb{C}} \wedge (\dd\Theta_{\mathbb{C}})^{\wedge n}
\end{equation}
by leveraging linear independence of $\dd W^{j}$ and $\dd\Pi_{j}^*$.

\noindent\emph{Step 3: Unique Vector Field Extension:}
 Extend the real vector field $\mathfrak{X}(E)$ to $\mathfrak{X}(E_{\mathbb{C}}) \subset \ker(d\Theta_{\mathbb{C}})$ 
\begin{equation}
\varepsilon_{\mathbb{C}} = \Pi_{j}^* \dot{W}^{j} - \mathcal{H}.
\end{equation}
Confirm the conditions $\iota_{X_{\mathcal{H}}} \Theta_{\mathbb{C}} = -\varepsilon_{\mathbb{C}}$ and $\iota_{X_{\mathcal{H}}} \dd\Theta_{\mathbb{C}} = 0$, and preserve the contact structure via $L_{X_{\mathcal{H}}} \Theta_{\mathbb{C}} = 0$.

\noindent\emph{Step 4:  Complex Poisson Bracket and Dynamical Equations:}
 Generalize the real Poisson bracket to complex functionals $F(W, \Pi^*)$ and $G(W, \Pi^*)$ as given in Eq.~\eqref{eq:poisson-bracket} derive the complex contact dynamical equations Eq.~\eqref{eq:contact-dynamics}, and ensure consistency with the conjugate variable relation $\{ W^i, \Pi_j^* \} = \delta_j^i$.

\noindent\emph{Step 5:  Extremal Paths and CHJ Equation:} Extend the real action functional $\mathcal{S}$ to the complex case 
\begin{equation}
\mathcal{S} = \int \varepsilon_{\mathbb{C}} \dd t = -\int \Theta_{\mathbb{C}},
\end{equation}
establish the relation $\partial \mathcal{S}/\partial W^j = \Pi^*_j$, and derive the contact Hamilton-Jacobi equation by extremizing $\mathcal{S}$ while maximising the probability variation $\dd P$:
\begin{equation}
\frac{\partial \mathcal{S}}{\partial t} + \mathcal{H}\left[ W^j, \frac{\partial \mathcal{S}}{\partial W^j}, t \right] = 0.
\end{equation}
\end{proof}

\subsection{Contact Geometry of Probability Measure}
Within the contact-geometric approach developed in this study, the probability measure $P[W]$ plays a fundamental role in characterising the statistical properties of dissipative structures. This subsection derives the explicit form of $P[W]$ and provides its geometric interpretation.

\subsubsection{Evolution of Complex Structure}

\begin{lemma}[Evolution of Complex Structure under Contact Flow]\label{lem:Evolution}
Let $(\mathcal{E},\Theta)$ be a complex contact manifold of complex dimension $2n+1$ with contact distribution $\ker\Theta$ and induced symplectic form $\omega=\dd \Theta|_{\ker\Theta}$.  
Let $X\in\mathfrak{X}(\mathcal{E})$ be the unique characteristic vector field satisfying
\[
L_X\Theta=0, \qquad \iota_{X}\Theta=-\varepsilon,\qquad \iota_{X}\omega=0,\qquad \dd\varepsilon(X)=0.
\]
Let $J$ be an $\omega$-compatible complex structure on $\ker\Theta$. Define the intrinsic symplectic volume as follows. At each point $p\in\mathcal{E}$, choose a symplectic frame $\{e_{i},\tilde{e}_{i}\}_{i=1}^{n}\subset\ker\Theta$ satisfying
\begin{equation}
\tilde{e}_i = J e_i, \qquad 
\omega(e_{i},\tilde{e}_{j})=\delta_{ij},\qquad \omega(e_{i},e_{j})=\omega(\tilde{e}_{i},\tilde{e}_{j})=0.
\end{equation}
Define
\begin{equation}
\mathcal{V}_{\mathrm{int}}=\operatorname{Re}\!\bigl[\omega^{\wedge n}(e_{1},\tilde{e}_{1},\dots,e_{n},\tilde{e}_{n})\bigr]\neq 0.
\end{equation}
Then the Lie derivative of $J$ along $X$ must satisfy
\begin{equation}
L_X J = -\varepsilon J,
\end{equation}
and consequently,
\begin{equation}
L_{X}\mathcal{V}_{\mathrm{int}}=-n\varepsilon \mathcal{V}_{\mathrm{int}}.
\end{equation}

\end{lemma}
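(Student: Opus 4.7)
The plan is to split the argument into two parts corresponding to the two conclusions: first establish the tensorial identity $L_{X}J=-\varepsilon J$, and then derive the scalar evolution $L_{X}\mathcal{V}_{\mathrm{int}}=-n\varepsilon\mathcal{V}_{\mathrm{int}}$ as a multilinear-algebra consequence of the first identity.

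For the first part, I would begin by extracting the structural form of $X$ from the four conditions imposed on it. The hypothesis $L_{X}\Theta=0$ combined with Cartan's formula and $\iota_{X}\Theta=-\varepsilon$ yields $\iota_{X}d\Theta=d\varepsilon$, while $\iota_{X}\omega=0$ shows that $d\varepsilon$ vanishes on $\ker\Theta$, and $d\varepsilon(X)=0$ closes off the Reeb direction as well, so that $d\varepsilon$ is tightly controlled and $\varepsilon$ behaves as a flow-invariant scalar. Consequently $X$ reduces, up to a $\ker\Theta$-tangent symplectomorphism that commutes with $J$, to the conformal Reeb field $X=-\varepsilon R$, with $R$ the Reeb field determined by $\iota_{R}\Theta=1$, $\iota_{R}d\Theta=0$. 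On the contact distribution the Reeb component acts on the $\omega$-compatible complex structure $J$ through the almost-contact compatibility $JR=0$, $\Theta\circ J=0$; evaluating on a compatible frame then yields the tensorial rescaling $L_{X}J=-\varepsilon J$, with the coefficient $-\varepsilon$ inherited directly from $\iota_{X}\Theta$.

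For the second part, I would apply the Leibniz rule to the scalar
\[
\mathcal{V}_{\mathrm{int}}=\operatorname{Re}\!\bigl[\omega^{\wedge n}(e_{1},\tilde{e}_{1},\ldots,e_{n},\tilde{e}_{n})\bigr],
\]
noting that $L_{X}\omega^{\wedge n}=0$ because $L_{X}\omega=L_{X}d\Theta=dL_{X}\Theta=0$. Hence the only surviving contributions come from differentiating the $2n$ frame vectors:
\[
L_{X}\mathcal{V}_{\mathrm{int}}=\operatorname{Re}\!\sum_{i=1}^{n}\!\Bigl[\omega^{\wedge n}(\ldots,L_{X}e_{i},\tilde{e}_{i},\ldots) + \omega^{\wedge n}(\ldots,e_{i},L_{X}\tilde{e}_{i},\ldots)\Bigr].
\]
Substituting $L_{X}\tilde{e}_{i}=(L_{X}J)e_{i}+J(L_{X}e_{i})=-\varepsilon\tilde{e}_{i}+J(L_{X}e_{i})$ into the even-indexed slots, the $-\varepsilon\tilde{e}_{i}$ piece at each of the $n$ even slots contributes exactly $-\varepsilon\mathcal{V}_{\mathrm{int}}$ by rescaling the entry in place, summing to $-n\varepsilon\mathcal{V}_{\mathrm{int}}$. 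The residual contributions involving $L_{X}e_{i}$ in odd slots and $J\,L_{X}e_{i}$ in adjacent even slots cancel via the $\omega$-compatibility identity $\omega(Ju,Jv)=\omega(u,v)$, which restricts $L_{X}|_{\ker\Theta}$ to a unitary-symplectic transformation whose traceless part does not contribute to the real part of the volume functional.

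The hard part will be the first stage, i.e., the tensorial equation $L_{X}J=-\varepsilon J$. An honest derivation requires identifying the specific almost-contact structure on $\mathcal{E}$ in which the Reeb flow induces the rotation $L_{R}J=J$ of the compatible complex structure; this is not automatic in a general contact manifold, and here it must be read off from the jet-bundle construction behind Theorem~\ref{thm:least-constraint-complex}, in particular from the way the canonically conjugate field $\Pi^{*}$ couples to $W$ through the contact form $\Theta_{\mathbb{C}}$. A subtlety is that a naive matrix argument produces a tensorial inconsistency with $J^{2}=-I$ unless the identity is interpreted as a conformal-weight statement on $J$ relative to the flow; making this precise is exactly where the complex-contact setup must be invoked. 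If a direct intrinsic derivation is awkward, a cleaner alternative is to adopt $L_{X}J=-\varepsilon J$ as the defining compatibility condition of the dissipative complex structure with the characteristic flow, whereupon the lemma reduces to the multilinear calculation outlined in the second stage.
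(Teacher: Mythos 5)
Your two-stage structure mirrors the paper's: first the tensorial identity $L_XJ=-\varepsilon J$, then the volume evolution. The second stage is sound and is essentially the paper's Step 4 in different clothing: the paper extends the frame along the flow by $e_i(t)=\phi_{t*}e_i(0)$ and $\tilde e_i(t)=J(p_t)e_i(t)$, forms the pairing matrix $A_{ij}(t)=\omega(e_i(t),\tilde e_j(t))$, integrates $\phi_t^*J=e^{-\int\varepsilon\,\mathrm{d}t}J$ to get $A(t)=e^{-\int\varepsilon\,\mathrm{d}t}I_n$ and $F(t)=\det A(t)=e^{-n\int\varepsilon\,\mathrm{d}t}$, and concludes $L_XF=-n\varepsilon F$. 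Your Leibniz-rule computation on $\omega^{\wedge n}$ reaches the same factor of $-n\varepsilon$; note that with the paper's flow-extended frame one has $L_Xe_i=0$ identically, so the residual terms you dispose of by a unitary-symplectic cancellation argument simply vanish, and that somewhat delicate step can be avoided.

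The genuine gap is in your first stage, and you have diagnosed it yourself: $L_XJ=-\varepsilon J$ cannot be derived from the stated hypotheses. Differentiating $J^2=-\mathrm{Id}$ gives $(L_XJ)J+J(L_XJ)=0$, so $L_XJ$ must anticommute with $J$, whereas $-\varepsilon J$ commutes with it; the two are compatible only when $\varepsilon=0$. Your proposed repairs (reading an almost-contact structure off the jet bundle, or a Sasakian-type relation $L_RJ=J$ for the Reeb field) do not close this, and you should be aware that the paper's own argument at the corresponding step is no stronger: it introduces $h(U,V)=\omega(U,JV)$, computes $L_Xh(U,V)=\omega(U,(L_XJ)V)$ using $L_X\omega=0$, then asserts that ``for $h$ to remain Hermitian'' one must have $L_Xh=\lambda h$ and hence $L_XJ=\lambda J$, finally fixing $\lambda=-\varepsilon$ by declaring that the evolution of $J$ ``must match'' the scaling of $\Theta$. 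Preservation of Hermiticity only requires $L_Xh$ to be Hermitian, not proportional to $h$, so the paper is in effect doing exactly what you offer as a fallback: postulating $L_XJ=-\varepsilon J$ as the defining compatibility of $J$ with the dissipative flow. Your proposal is therefore not weaker than the paper's proof on this point, but to be honest about the logic you must state explicitly that the first identity is an additional structural assumption rather than a consequence of the listed hypotheses; once it is granted, your second stage (and the paper's) goes through.
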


\begin{proof}
\noindent\emph{Step 1: Flow extension of symplectic frame.} 
 
Let $\phi_t$ be the flow generated by the vector $X$. Fix $p_0 \in \mathcal{E}$ and a symplectic frame $\{e_i(0), \tilde{e}_i(0)\}$ at $p_0$ satisfying the given conditions. Extend the frame along the integral curve $p_t = \phi_t(p_0)$ by push-forward $\phi_{t*}$ and transformation $J(p_t)$, respectively:
\begin{equation}
e_i(t) = \phi_{t*} e_i(0), \quad \tilde{e}_i(t) = J(p_t) e_i(t).
\end{equation}
Define the pairing matrix $A(t) = (A_{ij}(t))$ with $A_{ij}(t) = \omega(e_i(t), \tilde{e}_j(t))$. Since $A(0) = I_n$, we have $\det A(0) = 1$.

\noindent\emph{Step 2: Invariance under flow.}  

Notecing that $(\phi_t^{*}J)(p_0)e_j(0) = \phi_{t*}^{-1}[J(p_t)(\phi_{t*}e_j(0))]$, we had that
\begin{equation}
\phi_{t*}(\phi_t^{*}J)(p_0)e_j(0)=\phi_{t*}\phi_{t*}^{-1}[J(p_t)(\phi_{t*}e_j(0))]=J(p_t)(\phi_{t*}e_j(0)),
\end{equation}
then using $L_X \omega = 0$ (from $L_X \Theta = 0$ and $\omega = \dd \Theta|_{\ker\Theta}$), we express:
\begin{equation}
A_{ij}(t) = \omega(\phi_{t*} e_i(0), J(p_t)\phi_{t*} e_j(0)) = \omega(e_i(0), (\phi_t^* J)(p_0) e_j(0)).
\end{equation}

Let $J(t) = \phi_t^* J$ (pullback of $J$) and $F(t) = \det A(t)$. Note that $F(t)=\det A(t)$ and $A(t)$ is the pairing matrix defined on the contact flow $\phi_t$ (i.e., on the points of the integral curve $p_t=\phi_t(p_0)$), so $F(t)$ is also a scalar quantity associated with $\phi(t)$. Then $\mathcal{V}_{\text{int}} = \operatorname{Re}[F(t)]$ (up to constant factor).

\noindent\emph{Step 3: Evolution of complex structure.}  

Differentiate $A(t)$ at $t=0$:
\begin{equation}
\dot{A}_{ij}(0) = \omega(e_i(0), \dot{J}(0) e_j(0)), \quad \dot{J}(0) = L_X J.
\end{equation}
Consider the Hermitian metric $h(U,V) = \omega(U, JV)$. Compute:
\begin{equation}
L_X h(U,V) = (L_X \omega)(U, JV) + \omega(U, (L_X J)V) = \omega(U, (L_X J)V),
\end{equation}
since $L_X \omega = 0$. For $h$ to remain Hermitian, $L_X h = \lambda h$ for some $\lambda$, implying $L_X J = \lambda J$.

To determine $\lambda$, use the contact condition $\iota_X \Theta = -\varepsilon$ and $X(\varepsilon) = 0$ (so $\varepsilon$ is constant). The contact form $\Theta$ induces $\omega = \dd \Theta|_{\ker\Theta}$, and the flow of $X$ scales $\Theta$ by $-\varepsilon$ (since $\iota_X \Theta = -\varepsilon$). As $J$ is $\omega$-compatible, its evolution must match this scaling:
\begin{equation}\label{eq:variation-of-structure}
 L_X J = - \varepsilon J.
\end{equation}

\noindent\emph{Step 4: Volume evolution.} 

To establish the key identity linking the time derivative of the pullback of the complex structure $J$ to the pullback of its Lie derivative along the characteristic vector field $X$, we proceed as follows: 

Let $\phi_t: E \to E$ denote the one-parameter diffeomorphism group (flow) generated by $X$, satisfying the group operation $\phi_{t+s} = \phi_t \circ \phi_s$ for all $t, s \in \mathbb{R}$. By the composition law of pullback operations for diffeomorphisms, we have $(\phi_t \circ \phi_s)^* J = \phi_s^* (\phi_t^* J)$, which simplifies to $\phi_{t+s}^* J = \phi_s^* (\phi_t^* J)$ due to the group structure.

Next, we define the time derivative of the pullback tensor field $\phi_t^* J$. For a fixed $t$, the derivative with respect to $t$ is given by the limit of the difference quotient as $s \to 0$, i.e.,
\begin{equation}
\frac{\dd}{\dd t}(\phi_t^* J) = \left. \frac{\dd}{\dd s} \right|_{s=0} \phi_{t+s}^* J.
\end{equation}
This expression captures the instantaneous rate of change of $\phi_t^* J$ along the flow generated by $X$.

Substituting the group property $\phi_{t+s}^* J = \phi_s^* (\phi_t^* J)$ into the derivative definition, we rewrite the right-hand side as
\begin{equation}
\frac{\dd}{\dd t}(\phi_t^* J) = \left. \frac{\dd}{\dd s} \right|_{s=0} \phi_s^* (\phi_t^* J).
\end{equation}
Recall the flow-based definition of the Lie derivative: for any tensor field $T$, the Lie derivative along $X$ is $\mathcal{L}_X T = \left. \frac{\dd}{\dd s} \right|_{s=0} \phi_s^* T$. Identifying $T = \phi_t^* J$ in this definition, the right-hand side of the above equation is exactly $\mathcal{L}_X (\phi_t^* J)$, leading to
\begin{equation}
\frac{\dd}{\dd t}(\phi_t^* J) = \mathcal{L}_X (\phi_t^* J).
\end{equation}
We now invoke the commutativity between pullback and Lie derivative for the flow $\phi_t$ generated by $X$. For any diffeomorphism $\phi_s$ in the group and any tensor field $T$, the pullback preserves the Lie derivative, i.e., $\phi_s^* (\mathcal{L}_X T) = \mathcal{L}_X (\phi_s^* T)$. Setting $s = t$ and $T = J$, this commutativity yields $\mathcal{L}_X (\phi_t^* J) = \phi_t^* (\mathcal{L}_X J)$. Combining the two results above, we arrive at the desired identity:
\begin{equation}
\frac{\dd}{\dd t}(\phi_t^* J) = \phi_t^* (\mathcal{L}_X J).
\end{equation}
The flow $\phi_t$ of $X$ scales the complex structure exponentially:
\begin{equation}\label{eq:J_flow}
    \phi_t^* J = e^{-\int\! \varepsilon \dd t} J .
\end{equation}
Recall the definition $A_{ij}(t) = \omega(e_i(t), J e_j(t))$ with the frame $e_i(t)=\phi_{t*}e_i(0)$.
Using the invariance $L_X\omega=0$ (which follows from $L_X\Theta=0$ and $\omega = \dd \Theta|_{\ker\Theta}$) we obtain
\begin{align}
    A_{ij}(t) &= \omega\!\bigl(\phi_{t*}e_i(0),\, J(p_t)\,\phi_{t*}e_j(0)\bigr) \nonumber\\
              &= \omega\!\bigl(e_i(0),\, (\phi_t^*J)(p_0)\, e_j(0)\bigr) \nonumber\\
              &= e^{-\varepsilon t}\; \omega\!\bigl(e_i(0), J(p_0) e_j(0)\bigr) . \label{eq:A_t_explicit}
\end{align}
At $t=0$ the frame was chosen so that $\omega(e_i(0), J e_j(0)) = \delta_{ij}$; therefore
\begin{equation}\label{eq:A_t}
    A(t) = e^{-\int\! \varepsilon \dd t} I_n .
\end{equation}
Consequently,
\begin{equation}\label{eq:F_t}
    F(t) = \det A(t) = e^{-n\int\! \varepsilon \dd t},
\end{equation}
and its derivative is
\begin{equation}\label{eq:Fdot_global}
    \dot{F}(t) = -n\varepsilon e^{-n\int\! \varepsilon \dd t} = -n\varepsilon F(t).
\end{equation}
Evaluating at $t=0$ gives $\dot{F}(0) = -n\varepsilon$, which is consistent with the computation via Jacobi's formula.  
Since \eqref{eq:Fdot_global} holds for every $t$, the Lie derivative of $F$ satisfies
\begin{equation}\label{eq:Lie_F}
    L_X F = -n\varepsilon F .
\end{equation}

Finally, the intrinsic symplectic volume is defined as $\mathcal{V}_{\mathrm{int}} = \operatorname{Re}[F]$ (up to a constant factor).  
Because $\varepsilon$ is real, taking the real part commutes with the Lie derivative, yielding
\begin{equation}\label{eq:Lie_V}
    L_X \mathcal{V}_{\mathrm{int}} = \operatorname{Re}[L_X F]
                                   = \operatorname{Re}[-n\varepsilon F]
                                   = -n\varepsilon \operatorname{Re}[F]
                                   = -n\varepsilon \mathcal{V}_{\mathrm{int}} .
\end{equation}

\noindent\emph{Step 5: Normalisation independence.}  

If $\omega(e_i, \tilde{e}_j) = c \delta_{ij}$, then $F(0) = c^n$ and $\dot{F}(0) =-  n\varepsilon c^n$, so the evolution equation holds for any frame normalization.
\end{proof}

\subsubsection{Probability Measure of Vector Bundle}

Based on the lemma given above, we can establish the following theorem:
\begin{theorem}[Probability Measure of Vector Bundle]
\label{thm:pdf-contact}
In the setting of Theorem \ref{thm:least-constraint-complex}, consider the contact manifold $(\mathcal{E}_{\mathbb{C}},\Theta_{\mathbb{C}})$, the contact volume form $\Omega_{\mathbb{C}} = \Theta_{\mathbb{C}} \wedge (\dd \Theta_{\mathbb{C}})^{\wedge n}$, the unique vector field $X_{\mathcal{H}}$, and the action functional $\mathcal{S} = -\int \Theta_{\mathbb{C}}$. Let $\pi: \mathcal{E}_{\mathbb{C}} \to E_{\mathbb{C}}$ be the projection $\pi(t, W, \Pi^*) = W$. Then, there exists a probability density functional on space $E_{\mathbb{C}}$ of the form
\begin{equation}
\mathcal{P}[W] = \frac{1}{\mathcal{Z}} \, \exp\bigl( n\mathcal{S}[W] \bigr),
\qquad 
\mathcal{Z} = \int_{E_{\mathbb{C}}} \exp\bigl( n\mathcal{S}[W] \bigr) \, \mathcal{D}W,
\end{equation}
where $\mathcal{S}[W] = -\int \Theta_{\mathbb{C}}$ is the action functional defined in Theorem \ref{thm:least-constraint-complex}, and $\mathcal{D}W$ denotes the formal volume element on $E_{\mathbb{C}}$.
\end{theorem}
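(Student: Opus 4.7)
The plan is to exploit the contact volume form $\Omega_{\mathbb{C}} = \Theta_{\mathbb{C}} \wedge (\dd\Theta_{\mathbb{C}})^{\wedge n}$ on the extended phase space $\mathcal{E}_{\mathbb{C}}$, together with the invariance $L_{X_{\mathcal{H}}}\Theta_{\mathbb{C}} = 0$ from Theorem~\ref{thm:least-constraint-complex} and the exponential scaling $L_{X_{\mathcal{H}}}\mathcal{V}_{\mathrm{int}} = -n\varepsilon_{\mathbb{C}}\mathcal{V}_{\mathrm{int}}$ from Lemma~\ref{lem:Evolution}, to identify the stationary density $\mathcal{P}[W]$ on the base $E_{\mathbb{C}}$. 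The guiding observation is that while the full contact volume is preserved along the characteristic flow, its decomposition into transverse symplectic and longitudinal parts is not, so the projection $\pi$ must absorb the discrepancy as a multiplicative weight, which I expect to recognise as $\exp(\mathcal{S})$.

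First I would fix the ansatz $\mathcal{P}[W] = \mathcal{Z}^{-1}\exp(\mathcal{S}[W])$ and verify the essential matching condition $\delta\ln\mathcal{P}/\delta W^j = \delta\mathcal{S}/\delta W^j = \Pi^*_j$, which identifies the canonical momentum as the functional gradient of the log-density. This step is immediate from the relation $\partial\mathcal{S}/\partial W^j = \Pi^*_j$ established in Theorem~\ref{thm:least-constraint-complex}(3), and it already forces any exponential-form density to have $\mathcal{S}$ as its exponent up to an additive normalisation constant.

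Next I would verify dynamical consistency by computing $L_{X_{\mathcal{H}}}(\mathcal{P}\cdot\pi_{*}\Omega_{\mathbb{C}})$ and showing it vanishes. Using $\partial_t\mathcal{S} = -\mathcal{H}$ from the contact Hamilton--Jacobi equation, together with the chain rule and the canonical brackets of Theorem~\ref{thm:least-constraint-complex}, one finds that $\mathcal{P}$ satisfies a contact Liouville equation of the form $\partial_t\mathcal{P} + \{\mathcal{P},\mathcal{H}\} = \varepsilon_{\mathbb{C}}\mathcal{P}$, whose right-hand side precisely balances the volume contraction $-n\varepsilon_{\mathbb{C}}$ supplied by Lemma~\ref{lem:Evolution} after accounting for the push-forward from $\mathcal{E}_{\mathbb{C}}$ to $E_{\mathbb{C}}$. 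Normalisation by $\mathcal{Z} = \int_{E_{\mathbb{C}}}\exp(\mathcal{S})\,\mathcal{D}W$ then turns the invariant density into a bona fide probability functional, and the equivalence $\mathcal{S} = -\int\Theta_{\mathbb{C}}$ displays the geometric origin of the exponent.

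The main obstacle will be giving rigorous meaning to the formal functional volume $\mathcal{D}W$ in the infinite-dimensional complex function space $E_{\mathbb{C}}$ and ensuring the convergence of $\mathcal{Z}$. In field theory this is typically handled by cylindrical measures, Gaussian reference measures, or lattice regularisations, none of which are intrinsic to the contact framework itself. I would therefore treat $\mathcal{D}W$ as a formal reference measure throughout, deferring the constructive issues to specific Hamiltonian choices such as the CGLE case analysed in later sections, where the saddle-point structure controls the growth of $\mathcal{S}$ and renders $\mathcal{Z}$ finite.
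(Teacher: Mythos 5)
Your overall strategy---compensate the volume contraction of Lemma~\ref{lem:Evolution} by an exponential-of-action weight and push the resulting invariant measure down to $E_{\mathbb{C}}$---is the same mechanism the paper's proof uses, but your execution has a genuine gap at the balancing step. Along the characteristic flow one has $\tfrac{\dd}{\dd t}\mathcal{S}=\varepsilon_{\mathbb{C}}$, so your ansatz $\mathcal{P}\propto\exp(\mathcal{S})$ evolves at rate $+\varepsilon_{\mathbb{C}}\mathcal{P}$ (your ``contact Liouville equation'' is correct), whereas Lemma~\ref{lem:Evolution} gives $L_{X_{\mathcal{H}}}\mathcal{V}_{\mathrm{int}}=-n\varepsilon_{\mathbb{C}}\mathcal{V}_{\mathrm{int}}$. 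The product $\exp(\mathcal{S})\,\mathcal{V}_{\mathrm{int}}$ therefore changes at rate $(1-n)\varepsilon_{\mathbb{C}}$ and is flow-invariant only when $n=1$; the clause ``after accounting for the push-forward from $\mathcal{E}_{\mathbb{C}}$ to $E_{\mathbb{C}}$'' is exactly where an argument is needed, and none is supplied. The paper resolves this by \emph{constructing} the invariant object rather than verifying an ansatz: integrating the contraction equation along the flow gives $\mathcal{V}_{\mathrm{int}}(t)=\mathcal{V}_{\mathrm{int}}(0)\exp\!\bigl[-n(\mathcal{S}(t)-\mathcal{S}(0))\bigr]$, hence $\mathcal{I}=\mathcal{V}_{\mathrm{int}}\exp(n\mathcal{S})$ satisfies $L_{X_{\mathcal{H}}}\mathcal{I}=0$, and the measure $\mu=\mathcal{I}\,\dd t\wedge\dd^{\wedge n}W$ is pushed forward \emph{as a measure} via $\pi$ to give $\dd P=Z^{-1}\mathcal{I}\,\dd^{n}W$. (Your $\pi_{*}\Omega_{\mathbb{C}}$, as a push-forward of a differential form along a projection, is not defined without fibre integration; the paper sidesteps this by working with the measure directly.) To land on the statement's $\exp(\mathcal{S})$ you must either carry the factor $n$ explicitly, as the paper's $\mathcal{I}$ does, or argue why it can be absorbed; asserting that $+\varepsilon_{\mathbb{C}}$ cancels $-n\varepsilon_{\mathbb{C}}$ does not suffice.

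A secondary weakness: your Step~1 is circular as a derivation. Showing that an exponential density whose functional log-gradient equals $\Pi^{*}_{j}$ must have exponent $\mathcal{S}$ presupposes the exponential form; it does not show that the contact geometry selects such a density. In the paper this step does not appear, because the exponent emerges from $\int\varepsilon_{\mathbb{C}}\,\dd t=\mathcal{S}$, which converts the contraction rate of the intrinsic symplectic volume into the action. Your concluding remarks on the formal character of $\mathcal{D}W$ and the finiteness of $\mathcal{Z}$ are reasonable and match the paper's own level of rigour on that point.
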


\begin{proof}
We proceed through the following geometrically motivated steps.

\noindent\emph{Step 1: Finite-dimensional approximation of the contact structure.}

Consider a complex contact manifolds $(\mathcal{E}_{\mathbb{C}}, \Theta_{\mathbb{C}})$ 
with $\dim_{\mathbb{C}} \mathcal{E}_{\mathbb{C}}  = 2n + 1$, where 
\begin{equation}
n = \frac{1}{2} \operatorname{rank}_{\mathbb{C}}\bigl(\dd \Theta_{\mathbb{C}}|_{\operatorname{Ker}\Theta_{\mathbb{C}}}\bigr). \label{eq:n_N}
\end{equation}
Each manifold satisfies the non-degeneracy condition
\begin{equation}
\Theta_{\mathbb{C}} \wedge \dd \Theta_{\mathbb{C}}^{\wedge n} \neq 0, \label{eq:nondeg}
\end{equation}
and carries the induced symplectic form on the contact distribution:
\begin{equation}
\omega_{\mathbb{C}} = \dd \Theta_{\mathbb{C}}|_{\operatorname{Ker}\Theta_{\mathbb{C}}}. \label{eq:omega_N}
\end{equation}
By Theorem \ref{thm:least-constraint-complex}, there exists a unique vector field $X_{\mathcal{H}}$ satisfying
\begin{equation}
\iota_{X_{\mathcal{H}} } \Theta_{\mathbb{C}}  = -\varepsilon_{\mathbb{C}} , \qquad 
\iota_{X_{\mathcal{H}} } \omega_{\mathbb{C}}  = 0, \label{eq:char_eq}
\end{equation}
where $\varepsilon_{\mathbb{C}}  = \partial_t \mathcal{S} $ and $\mathcal{S}  = -\int \Theta_{\mathbb{C}} $ is the action functional.
Let $\pi : \mathcal{E}_{\mathbb{C}}  \to E_{\mathbb{C}}$ denote the canonical projection.

\noindent\emph{Step 2: Intrinsic symplectic volume evolution.}

On contact distribution $\operatorname{Ker}\Theta_{\mathbb{C}} $, choose a symplectic frame 
$\{e_1 , J e_1 , \dots, e_{n} , J e_{n} \}$ as in the Lemma, and define the intrinsic symplectic volume
\begin{equation}
\mathcal{V}_{\text{int}}  = \operatorname{Re}\!\bigl[ (\omega_{\mathbb{C}} )^{\wedge n }(e_1 , J e_1 , \dots, e_{n} , J e_{n} ) \bigr] \neq 0. \label{eq:Vint_def}
\end{equation}
According to Lemma~\ref{lem:Evolution}, the Lie derivative of $\mathcal{V}_{\text{int}}$ along $X_{\mathcal{H}}$ satisfies
\begin{equation}
 \mathcal{L}_{X_{\mathcal{H}}} \mathcal{V}_{\text{int}} = -n \varepsilon_{\mathbb{C}} \mathcal{V}_{\text{int}}. \label{eq:Vint_evo}
\end{equation}

\noindent\emph{Step 3: Construction of an invariant measure along the characteristic flow.}

From the evolution equation \eqref{eq:Vint_evo}, we obtain
\begin{equation}
\begin{aligned}
\mathcal{V}_{\text{int}}(t) 
&= \mathcal{V}_{\text{int}}(0) \exp\!\Bigl( -n \int_0^t \varepsilon_{\mathbb{C}}(s) \, \dd s \Bigr)\\
&= \mathcal{V}_{\text{int}}(0) \exp\!\Bigl( -n \bigl(\mathcal{S}(t) - \mathcal{S}(0)\bigr) \Bigr).
\end{aligned}
\end{equation}
Consequently,
\begin{equation}
\mathcal{I}=\mathcal{V}_{\text{int}}(t) \exp\!\left( n \mathcal{S}(t) \right)
 = \mathcal{V}_{\text{int}}(0) \exp\!\left( n \mathcal{S}(0) \right). \label{eq:Vint_sol}
\end{equation}
This shows that the quantity $\mathcal{I}$ is constant along the characteristic flow, i.e., $\mathcal{L}_{X_{\mathcal{H}}}\mathcal{I}=0$. Therefore, we can define the measure $\mu$ as
\begin{equation}
\mu\dd t = \exp\!\bigl( n \mathcal{S}(t, W) \bigr) \mathcal{V}_{\text{int}}(t) \, \dd t \wedge \dd^{\wedge n} W = \mathcal{I} \dd t \wedge \dd^{\wedge n} W. \label{eq:mu_coords2}
\end{equation}

\noindent\emph{Step 4: Pushforward to the configuration space and emergence of the probability density.}

We now push forward the measure $\mu$ to the configuration space $E_{\mathbb{C}}$. The projection $\pi: \mathcal{E}_{\mathbb{C}} \to E_{\mathbb{C}}$ sends $(W,\Pi^*)$ to $W$. For a measurable set $A$ in $E_{\mathbb{C}}$, we have
\begin{equation}
((\pi)_* \mu)(A) = \int_{\pi^{-1}(A)} \mu =  \int_{A} \mathcal{I} \, \dd ^n W.
\end{equation}

Thus, we define a probability measure on $E_{\mathbb{C}}$ by
\begin{equation}
\dd {P}(W) 
= \frac{1}{\mathcal{Z}} \mathcal{I} \, \dd ^nW, \quad
\mathcal{Z} = \int_{E_{\mathbb{C}}} \mathcal{I}\, \dd ^n W. \label{eq:prob_measure}
\end{equation}
Here we assume $0 < \mathcal{Z}  < \infty$, i.e., the measure is normalizable.
\end{proof}

\begin{remark}
    \label{rmk:prob_info_compression}
    This theorem reveals two fundamental geometric interpretations of the probability density functional $\mathcal{P}[W]$ in the contact-geometric framework:
    \begin{enumerate}
        \item \textit{Probability as Information Compression via Projection}: The probability measure ${P}[W]$ is essentially the result of information compression induced by the canonical projection $\pi: \mathcal{E}_{\mathbb{C}} \to E_{\mathbb{C}}$, which maps the high-dimensional contact manifold $E_{\mathbb{C}}$ with coordinates $(t, W, \Pi^*)$ to the low-dimensional configuration space of the complex field $W$. The invariant \[\mathcal{I} = \mathcal{V}_{\text{int}}(t)\exp\!\bigl( n \mathcal{S}(t) \bigr)\] serves as the core carrier of the geometric and dynamical information from the high-dimensional contact structure; its conservation ensures that the compressed information remains stable along the contact flow, without loss of key geometric constraints during the projection process.
        \item \textit{Information Encoding into the Action Functional}: The compressed geometric and statistical information is ultimately encapsulated in the action functional \[\mathcal{S} = -\int \Theta_{\mathbb{C}}\] The exponential form of the probability density functional $\mathcal{P}[W] \propto \exp(n\mathcal{S}[W])$ directly reflects that the statistical weight of any field configuration $W$ is determined by the action $\mathcal{S}$, which aggregates the compressed information from the high-dimensional contact manifold. This establishes a direct link between the geometric invariant $\mathcal{I}$, the action functional $\mathcal{S}$, and the statistical properties of dissipative structures described by $\mathcal{P}[W]$.
    \end{enumerate}
\end{remark}
%
\section{Geometric Dynamics of 2D CGLE}\label{sec:Contact-HJE}

\subsection{2D CGLE and its Properties}
\label{sec:Ginzburg-Landau-Equation}

\subsubsection{2D CGLE}

In this paper, we study the 2D CGLE  for studying dispersive waves given in \cite{Newell_Whitehead_1969, Segel_1969, Khater_2006} as follows
\begin{equation}\label{eq:BRequation}
\frac{\partial W}{\partial t}  =\mu W - |W|^2W +  \left(\partial_x-\tfrac{i}{2}  \partial^2_y \right)^2 W,
\end{equation}
where we use $\partial_x$ and $\partial_y$ stand for, respectively, $\tfrac{\partial}{\partial x}$ and $\tfrac{\partial}{\partial y}$.

The two-dimensional Complex Ginzburg-Landau Equation (2D CGLE) in Eq.~(\ref{eq:BRequation}) is a fundamental model for investigating the evolution of dispersive waves. $W = W(x,y,t)$ denotes the complex amplitude of dispersive waves, a complex-valued function dependent on the two-dimensional spatial coordinates $x$, $y$ and the time variable $t$. The modulus $|W|$ quantifies the amplitude magnitude of the wave, while the argument of $W$ characterises the phase distribution of the wave; thus, $W$ is the core dependent variable that describes both the amplitude and phase evolution of dispersive waves.

The composite dispersive operator $\left( \partial_x - \tfrac{i}{2} \partial_y^2 \right)^2$ is the core of the dispersive wave model, which can be expanded as:
\begin{equation}
  \left( \partial_x - \tfrac{i}{2} \partial_y^2 \right)^2 = \partial_x^2 - i \partial_x \partial_y^2 - \frac{1}{4} \partial_y^4,
\end{equation}
This operator encapsulates three key dispersive effects:
  (1) $\partial_x^2$ represents second-order dispersion along the $x$-direction (wave velocity dependence on frequency);
  (2) $-i \partial_x \partial_y^2$ denotes mixed dispersion between the $x$- and $y$-directions (dispersion coupling across orthogonal spatial directions);
  (3) $-\frac{1}{4} \partial_y^4$ corresponds to fourth-order dispersion along the $y$-direction, a high-order correction for strong dispersion regimes.

\subsubsection{Nonlinear and Dissipative Nature of 2D CGLE}

The right-hand side of Eq.~(\ref{eq:BRequation}) consists of three terms, corresponding to linear effects, nonlinear effects, and dispersive effects, respectively. $\mu$ is a real constant defined as the linear gain/loss parameter, which governs the linear evolution tendency of the wave and determines system stability; $-|W|^2 W$ is a cubic nonlinear saturation term, where $|W|^2 = W^* W$ ($W^*$ is the complex conjugate of $W$, making $|W|^2$ a real-valued function). This term describes wave self-interaction, and the negative sign embodies the saturation effect: as $|W|$ increases, this term suppresses unbounded amplitude growth, enabling a balance between dispersive and nonlinear effects. This balance is the core mechanism for the formation of stable wave structures (e.g., solitons, periodic patterns, localised wave packets) in dispersive wave systems.

To understand the essential nature of the CGLE,  we define a 2D differential operator $\mathcal{L}$ as:
\begin{equation}
\mathcal{L} = \partial_x-\frac{i}{2}  \partial^2_y.
\end{equation}
For complex functions $f, g \in \mathbb{C}^{\infty}$ decaying at infinity, the adjoint $\mathcal{L}^\dagger$ is defined by:
\begin{equation}
\langle f, \mathcal{L}g \rangle \equiv \iint f^* (\mathcal{L}g) \dd x \dd y = \iint (\mathcal{L}^\dagger f)^* g  \dd x \dd y.
\end{equation}
Through integration by parts and by vanishing conditions of boundary terms, we had that 
\begin{equation}
\langle f, \mathcal{L}g \rangle = \tint\!\!\!\!\tint \left[ (\partial_x f)^* - \left(-\frac{i}{2}\right)^* (\partial_y^2 f)^* \right] g \dd {x} \dd y.
\end{equation}
Since $\left(-\tfrac{i}{2}\right)^* = \tfrac{i}{2}$, this leads to define that:
\begin{equation}
 \mathcal{L}^\dagger = -\partial_x + \tfrac{i}{2} \partial_y^2,
\end{equation}
confirming the skew-adjoint property $\mathcal{L}^\dagger = -\mathcal{L}$.

The skew-adjoint property $\mathcal{L}^\dagger = -\mathcal{L}$ plays a pivotal role in the dynamics of the CGLE and the study of dispersive waves. Firs, it imposes a rigid constraint on the spectrum of $\mathcal{L}$, ensuring all eigenvalues are purely imaginary—this directly translates to the oscillatory and nondissipative propagation of dispersive wave modes, as the linear evolution generated by $\mathcal{L}$ (and its square $\mathcal{L}^2$ in the CGLE) avoids exponential growth or decay inherent to dissipative/amplifying systems. Second, the skew-adjointness guarantees that $\mathcal{L}^2 = -\mathcal{L}^\dagger \mathcal{L}$ is a self-adjoint operator, whose real spectrum underpins the well-posedness of the CGLE’s linearization and simplifies the decomposition of dispersive wave modes. Third, this property preserves the conservative nature of the dispersive term in the equation, distinguishing it from the nonlinear ($|W|^2 W$) and gain/loss ($\mu W$) terms, thereby providing a mathematical foundation for analyzing the stability, periodicity, and propagation characteristics of dispersive waves described by the CGLE.

\subsection{Contact Dynamics of CGLE}

\subsubsection{Contact Dynamical Equations of 2D CGLE}
In the above equations, $\mathcal{H}$ can be determind by 
\begin{equation}
\varepsilon_{\mathbb{C}} =  \Pi^*\dot{W}   - \mathcal{H}, \quad \dot{W} =\partial  \mathcal{H}/\partial  \Pi^*
\quad \Longrightarrow\quad \varepsilon_{\mathbb{C}} =  \Pi^*\frac{\partial  \mathcal{H}}{\partial  \Pi^*}   - \mathcal{H},
\end{equation}
with the solution \cite{ZHONG2025117143}: 
\begin{equation}
\mathcal{H}[W,\Pi^*,t] = \mathcal{H}_{\text{par}}[W,\Pi^*,t] + \mathcal{H}_{\text{hom}}[W,\Pi^*,t],
\end{equation}
of which 
$
\mathcal{H}_{\text{hom}}[W,\lambda\Pi^*,t] = \lambda \mathcal{H}_{\text{hom}}[W,\Pi^*,t]
$
for $\lambda \neq 0$.

In the contact-geometric framework, the component $\mathcal{H}_{\text{par}}$ imposes constraints not only on the geometry of the configuration space. With $\mathcal{H}_{\text{par}}$ explicitly incorporated into the dynamical equations, they can, in principle, yield direct solutions tailored to the prescribed configuration, boundary, and initial conditions. However, retaining $\mathcal{H}_{\text{par}}$ in the equations substantially complicates their solution.

To simplify the analysis while preserving the non-degeneracy of the contact structure, we impose the constraint function to be a non-zero constant $C$, corresponding to $\mathcal{H}_{\text{par}}=C$. This yields the constraint $\varepsilon_{\mathbb{C}} = -C \neq 0$, ensuring the contact form remains non-degenerate. However, this choice corresponds to a trivial particular solution—a constant solution that does not contribute to the system's essential dynamics. Since constant terms do not affect the Poisson bracket structure, the equations of motion are entirely generated by the homogeneous part $\mathcal{H}_{\text{hom}}$.

In line with standard practice in dynamical system analysis, we therefore derive the general solutions using the homogeneous contact potential $\mathcal{H}_{\text{hom}}$ and subsequently impose the relevant constraints. This approach yields the same dynamical equations as setting $\mathcal{H}_{\text{par}}=0$ but avoids the theoretical issue of contact form degeneracy, allowing us to focus on the non-trivial dynamics of the system.

{\color{red}
}

Thus:
\begin{equation}\label{eq:contact-potential}
\mathcal{H}  = -(\mathcal{L} W)(\mathcal{L} \Pi^*) + \mu W \Pi^* - |W|^2 W \Pi^*.
\end{equation}
In fact, we had that 
\begin{equation}
\begin{aligned}
\mathcal{H} 
&=   \Pi^*\mathcal{L}^2 W +\mu W\Pi^*  - |W|^2 W \Pi^*  \\
&=   \Pi^* \left[\mathcal{L}^2 W +\mu W - |W|^2 W   \right]\\
&=  \Pi^* \dot{W}.
\end{aligned}
\end{equation}
Clearly, $\mathcal{H}(W, \lambda\Pi^*)=\lambda \mathcal{H}(W, \Pi^*)$, and thus it is a homogeneous solution as required. Moreover, The contact potential $\mathcal{H} = \Pi^* \dot{W}$ inherently carries a power-like physical meaning: $\Pi^*$, as the probability gradient, acts as a ``driving force" that quantifies the most probable evolutionary direction of the system’s state, while $\dot{W}$ describes the time rate of change of the complex field’s amplitude and phase, reflecting the intensity of the system’s dynamic evolution. Their product directly corresponds to the power of energy exchange between the system and its environment: positive values indicate energy input to drive the formation of ordered structures such as solitons, while negative values represent energy dissipation that maintains the dynamic balance of the system. This aligns with the core characteristic of dissipative systems relying on continuous energy exchange to sustain spatiotemporal order. Unlike the energy conservation in conservative systems, this power-like nature of the contact potential further explains why it becomes the key geometric quantity governing pattern formation in dissipative media, linking the statistical information of probability gradients to the dynamic evolution of the field in a unified contact-geometric framework.

Using the Poisson bracket definition and functional derivatives:
\begin{align}
\frac{\partial W}{\partial t} = \{W, \mathcal{H}\} =  \frac{\partial  W}{\partial  W} \frac{\partial  \mathcal{H}}{\partial  \Pi^*} - \frac{\partial  W}{\partial  \Pi^*} \frac{\partial  \mathcal{H}}{\partial  W}  
= \frac{\partial  \mathcal{H}}{\partial  \Pi^*(\vb{r})}.
\end{align}
We compute the functional derivatives of the contact potential:
\begin{align}
\dot{W}& = \frac{\partial  \mathcal{H}}{\partial  \Pi^*} = \frac{\partial }{\partial  \Pi^*}  \left[ -(\mathcal{L} W)(\mathcal{L} \Pi^*) + \mu W\Pi^* - |W|^2 W \Pi^* \right]  \nonumber\\
&= \mathcal{L}^2 W + \mu W - |W|^2 W,
\end{align}
which is exactly the CGLE. Similarly
\begin{align}
\dot\Pi^*&
=  -\frac{\partial  \mathcal{H}}{\partial  W} 
= - \frac{\partial }{\partial  W}   \left[ -(\mathcal{L} W)(\mathcal{L} \Pi^*) + \mu W\Pi^* - |W|^2 W \Pi^* \right]   \nonumber\\
&= -\mathcal{L}^2 \Pi^* - \mu \Pi^* + 2 |W|^2 \Pi^*.
\end{align}

\subsubsection{Contact Hamilton--Jacobi Equation}
\label{subsec:canonical_transformation}
The Hamilton–Jacobi (HJ) equation is a foundational tool in dynamical systems that relates the action functional to the Hamiltonian, thereby enabling the solution of complex dynamics via action extremisation. For dissipative systems described by contact geometry (Section \ref{sec:contact-geometric-formulation}), the classical HJ equation generalises to the contact HJ equation (CHJ), which retains the contact manifold's geometric structure while accounting for dissipation.

For classical Hamiltonian systems, the HJ equation is derived from the action functional $S[q,t]$ and Hamiltonian $H(q,p,t)$, where $q$ denotes generalised coordinates and $p = \partial S/\partial q$ as canonical momentum. It takes the form:
\[
\frac{\partial S}{\partial t} + H\left(q, \frac{\partial S}{\partial q}, t\right) = 0.
\]
This equation describes the evolution of the action functional that extremizes the system's dynamics.
 
In the contact geometric framework of CGLE, the contact potential $\mathcal{H}[W, \Pi^*, t]$ (Eq. \eqref{eq:contact-potential}) replaces the classical Hamiltonian. The action functional $S[W,t]$ is defined such that the canonical conjugate field $\Pi^*(\vb{r},t)$ to complex field $W$ is the functional derivative of the action:
\begin{equation}\label{eq:complex_action_momentum}
\Pi^*(\vb{r},t) = \frac{\partial  S[W,t]}{\partial  W(\vb{r},t)}. \
\end{equation}
Substituting this relation into the contact dynamical equation for the action functional \eqref{eq:dynamics-F}, with no explicit time dependence of $\mathcal{H}$, we get the CHJ equation for the 2D CGLE:
\begin{equation}
\frac{\partial  S}{\partial  t} + \mathcal{H}\left[W, \frac{\partial  S}{\partial  W}, t\right] = 0. \label{eq:hj_theorem}
\end{equation}
This is the core equation of this study, which governs the evolution of the action functional $S[W,t]$ in the contact geometric framework.

The direct solution of Eq.~\eqref{eq:hj_theorem} is challenging due to the complex-valued field $W(\vb{r},t)\in\mathbb{C}$ and 2D functional space. To simplify the equation and enable analytical solutions, we introduce a canonical transformation that maps the complex field $W$ and its conjugate field $\Pi^*$ to a real-valued field and its corresponding conjugate field, preserving the canonical Poisson bracket structure, which is critical for consistency with contact dynamics.

We define a canonical field transformation that converts the complex scalar field $W(\vb{r},t)\in\mathbb{C}$ and its conjugate connection $\Pi^*(\vb{r},t)\in\mathbb{C}$ to a real field $\Phi(y,t)\in\mathbb{R}$ and its canonical counterpart $\Pi_\Phi(y,t)\in\mathbb{R}$. The transformation is performed at a fixed field intensity $J$, treated as a time-independent external parameter once the system reaches a statistical steady state such that $\partial J/\partial t=0$.

We define a type-2 generating functional $F_2[W, \Pi_\Phi, t]$ that depends on the original complex field $W(y,t)$ and the new real conjugate field $\Pi_\Phi(y,t)$:
\begin{equation}
F_2[W, \Pi_\Phi, t] =  \sqrt{\frac{2\pi}{J}} \exp\left(-i(k_x x + \theta(y,t)) +\phi(y)\right) W(\vb{r},t) \Pi_\Phi(y,t),
\label{eq:generating_functional}
\end{equation}
where $k_x \in \mathbb{R}$ is the fixed wavenumber in the $x$-direction; $\theta(y,t) \in \mathbb{R}$ is a $y$-dependent phase ensuring consistency between complex and real descriptions, $\phi(y)\in \mathbb{R}$; $J > 0$ is the conserved total intensity $J = \iint_{\mathbb{R}^2}\! |W(\vb{r},t)|^2 \dd^2\vb{r}$.

The functional derivatives of $F_2$ determine the canonical transformation:
\begin{subequations}
\label{eq:transformation_relations}
\begin{empheq}[left=\empheqlbrace]{align}
\Pi^*(\vb{r},t) &= \frac{\partial  F_2}{\partial  W(\vb{r},t)}
= \sqrt{\frac{2\pi}{J}} \exp\left(-i(k_x x + \theta(y,t)+\phi(y))\right) \Pi_\Phi(y,t),
\label{eq:momentum_transform}\\
\Phi(y,t) &= \frac{\partial  F_2}{\partial  \Pi_\Phi(y,t)}
= \sqrt{\frac{2\pi}{J}} \exp\left(-i(k_x x + \theta(y,t)+\phi(y))\right) W(\vb{r},t).
\label{eq:field_transform}
\end{empheq}
\end{subequations}
Solving these relations explicitly gives the direct transformation:
\begin{subequations}
\label{eq:direct_transformation}
\begin{empheq}[left=\empheqlbrace]{align}
W(\vb{r},t) &= \sqrt{\frac{J}{2\pi}} \exp\left(i(k_x x + \theta(y,t) +\phi(y))\right) \Phi(y,t),
\label{eq:W_to_Phi}\\
\Pi^*(\vb{r},t) &= \sqrt{\frac{2\pi}{J}} \exp\left(-i(k_x x + \theta(y,t)+\phi(y))\right) \Pi_\Phi(y,t),
\label{eq:Pi_to_PiPhi}
\end{empheq}
\end{subequations}
with the normalization condition $\int_{-\infty}^{\infty} \Phi^2(y,t) \dd y = 1$ ensuring conservation of total intensity.

To confirm the transformation is canonical, we compute the fundamental Poisson bracket $\{\Phi(y),\Pi_\Phi(y')\}$.  Using the standard chain rule for functional brackets, we have
\begin{equation}
\{\Phi(y),\Pi_\Phi(y')\}
=
\frac{\partial \Phi(y)}{\partial  W(\tilde{\vb{r}})}
\frac{\partial \Pi_\Phi(y')}{\partial  \Pi^{*}(\tilde{\vb{r}})}
-
\frac{\partial \Phi(y)}{\partial  \Pi^{*}(\tilde{\vb{r}})}
\frac{\partial \Pi_\Phi(y')}{\partial  W(\tilde{\vb{r}})},
\label{eq:PB_chain}
\end{equation}
where all derivatives are taken at fixed $(\Pi_\Phi,W)$, the natural variables of a type-2 generating functional. From \eqref{eq:field_transform} and \eqref{eq:Pi_to_PiPhi} one finds
\begin{subequations}
\begin{empheq}[left=\empheqlbrace]{align}
\frac{\partial \Phi(y)}{\partial  W(\tilde{\vb{r}})}
&=
\sqrt{\frac{2\pi}{J}}\,\exp\!\bigl(-i(k_{x}\tilde x+\theta(\tilde y,t) +\phi(\tilde y))\bigr)\,
\delta (y-\tilde y),
\\
\frac{\partial \Pi_\Phi(y')}{\partial  \Pi^{*}(\tilde{\vb{r}})}
&=
\sqrt{\frac{J}{2\pi}}\,\exp\!\bigl(i(k_{x}\tilde x+\theta(\tilde y,t)+\phi(\tilde y))\bigr)\,
\delta (y'-\tilde y),
\end{empheq}
\end{subequations}
while the cross terms $\partial \Phi/\partial \Pi^{*}$ and $\partial \Pi_\Phi/\partial  W$ are identically zero. Inserting these expressions into Eq.~\eqref{eq:PB_chain} and performing the integral over the delta functions yields:
\begin{equation}
\{\Phi(y),\Pi_\Phi(y')\}
=
\delta(y-y').
\end{equation}
All other fundamental brackets vanish, confirming that the map $(W,\Pi^{*})\mapsto(\Phi,\Pi_\Phi)$ is indeed canonical.

Within the contact geometry framework, the conjugate field $\Pi^*$ is the functional derivative of the action. For the complex description \eqref{eq:complex_action_momentum}, the generating functional approach automatically ensures consistency. Applying the functional chain rule:
\begin{equation}
\frac{\partial  \mathcal{S}}{\partial  \Phi(y,t)} =  \frac{\partial  \mathcal{S}}{\partial  W(\vb{r}',t)} \frac{\partial  W(\vb{r}',t)}{\partial  \Phi(y,t)}.
\label{eq:chain_rule_action}
\end{equation}
From equation \eqref{eq:W_to_Phi}, we have:
\begin{equation}
\frac{\partial  W(\vb{r}',t)}{\partial  \Phi(y,t)} = \sqrt{\frac{J}{2\pi}} \exp\left(i(k_x x' + \theta(y',t) +\phi(y'))\right) \delta(y' - y).
\end{equation}
Substituting the resulting equation into \eqref{eq:chain_rule_action}:
\begin{equation}
\begin{aligned}
\frac{\partial  \mathcal{S}}{\partial  \Phi(y,t)} &=  \Pi^*(\vb{r}',t) \sqrt{\frac{J}{2\pi}} \exp\left(i(k_x x' + \theta(y',t) +\phi(y'))\right) \delta(y' - y) \\
&= \sqrt{\frac{J}{2\pi}} \exp\left(i(k_x x + \theta(y,t)+\phi(y'))\right) \Pi^*(\vb{r},t) \\
&= \Pi_\Phi(y,t),
\end{aligned}
\end{equation}
where the last equality follows from the inverse of Eq.~\eqref{eq:Pi_to_PiPhi}. This demonstrates perfect consistency between the action-based and generating functional approaches.

This generating functional approach provides a rigorous foundation for the canonical transformation, ensuring all geometric structures are preserved while enabling the reduction from complex 2D fields to real 1D fields for travelling-wave analysis.

The starting point is the contact potential of the CGLE system
\begin{equation}
\mathcal{H}=
\Bigl[-(\mathcal{L} W)(\mathcal{L}\Pi^{*})
+\mu\,W\Pi^{*}
-|W|^{2}W\Pi^{*}\Bigr],
\quad
\mathcal{L}=\partial_{x}-\frac{i}{2}\partial_{y}^{2}.
\end{equation}
Introduce the {real} differential operator
\begin{equation*}
\mathcal{L}_{y}\equiv-\frac{1}{2}\partial_{y}^{2}+k_{x},
\end{equation*}
which collects all $y$-dependent contributions surviving after $x$-averaging. Substituting \eqref{eq:W_to_Phi}, \eqref{eq:Pi_to_PiPhi} and
\begin{equation*}
|W|^{2}W=\frac{J^{3/2}}{(2\pi)^{3/2}}\,
\exp\!\Bigl[i\bigl(k_{x}x+\theta(y,t) +\phi(y)\bigr)\Bigr]\,
\Phi^{3}(y,t)
\end{equation*}
into $\mathcal{H}$,  
yielding the real contact potential: 
\begin{equation}
\label{eq:transformed_hamiltonian_corrected}
\mathcal{H}[\Phi,\Pi_{\Phi}]
= -(\mathcal{L}_{y}\Phi)(\mathcal{L}_{y}\Pi_{\Phi})
+\mu\,\Phi\Pi_{\Phi}
-\frac{J}{2\pi}\,\Phi^{3}\Pi_{\Phi}.
\end{equation}

The contact Hamilton--Jacobi theorem for field theories reads
\begin{equation}
\label{eq:hj_theorem_transform}
\frac{\partial  S}{\partial  t}
+\mathcal{H}\!\left[\Phi,\frac{\partial  S}{\partial \Phi}\right]=0.
\end{equation}	
Replace $\Pi_{\Phi}$ by $\partial  S/\partial \Phi$ in \eqref{eq:transformed_hamiltonian_corrected} and insert the result into \eqref{eq:hj_theorem_transform} to obtain the dissipative contact Hamilton--Jacobi (CHJ) equation 
\begin{equation}
\label{eq:dissipative_hj_corrected}
\frac{\partial  S}{\partial  t}
+
\left[-(\mathcal{L}_{y}\Phi)\!\left(\mathcal{L}_{y}\frac{\partial  S}{\partial \Phi}\right)
+\mu\,\Phi\,\frac{\partial  S}{\partial \Phi}
-\frac{J}{2\pi}\,\Phi^{3}\frac{\partial  S}{\partial \Phi}\right]=0.
\end{equation}
Equation \eqref{eq:dissipative_hj_corrected} governs the space-time evolution of the real action functional $S[\Phi,t]$ and constitutes the central analytical tool for analysing dissipative structures within the contact-geometric formulation of the cubic CGLE.

\subsubsection{Conservation and Symmetry in 2D CGLE}
The contact system possesses the following special quantities:

\paragraph{Contact potential $\mathcal{H}$.}
Since $\mathcal{H}$ has no explicit time dependence:
\begin{equation}
\frac{\dd\mathcal{H}}{\dd t} = \{\mathcal{H}, \mathcal{H}\} + \frac{\partial \mathcal{H}}{\partial t} =  0.
\end{equation}
Compared to classical Hamiltonian systems, where the Hamiltonian typically corresponds to physical energy and its conservation is inherently tied to energy conservation, the conservation of the contact potential $\mathcal{H}$ here is a geometric property of the contact manifold structure. This conservation holds universally for contact dynamical systems, whether stable or unstable, distinct from classical Hamiltonian systems, where energy conservation often follows from time-translation symmetry without explicit geometric constraints.

\paragraph{Energy: $\mathcal{N} = |W|^2 $.} 
The Poisson bracket is $\{\mathcal{N}, \mathcal{H}\}$:
\begin{equation}
\begin{aligned}
\frac{\dd\mathcal{N}}{\dd t} &= \{\mathcal{N}, \mathcal{H}\} 
=   \frac{\partial  \mathcal{N}}{\partial  W} \frac{\partial  \mathcal{H}}{\partial  \Pi^*} - \frac{\partial  \mathcal{N}}{\partial  \Pi^*} \frac{\partial  \mathcal{H}}{\partial  W}   \\
&=   W^* \frac{\partial  \mathcal{H}}{\partial  \Pi^*} - 0 \cdot \frac{\partial  \mathcal{H}}{\partial  W} \\
&=  W^* \left( \mathcal{L}^2 W + \mu W - |W|^2 W \right) \\
&=  W^* \dot{W}. 
\end{aligned}
\end{equation}

This vanishes when systems are in stable states, indicating that the power $\mathcal{N}$ (a quantity analogous to energy in dissipative wave systems) is conserved for stable configurations. For unstable states, however, $\tfrac{\dd\mathcal{N}}{\dd t} \neq 0$, meaning energy is not conserved. This contrasts with classical Hamiltonian systems, where energy conservation is generally maintained regardless of stability (as stability concerns perturbations rather than energy exchange with the environment). In contrast, the dissipative nature of the CGLE leads to energy non-conservation in unstable regimes due to unbalanced gain and loss dynamics and nonlinear interactions.

Although energy is not conserved in dissipative systems, the conservation of the contact potential $\mathcal H$ provides a powerful new geometric framework for analysing such systems. Unlike traditional approaches that rely on energy conservation as a fundamental principle, the conservation of the contact potential arises naturally from the contact geometric structure, making it applicable to both stable and unstable states. This geometric conservation law allows us to systematically study the emergence of ordered structures in dissipative systems, such as the formation of coherent patterns in the CGLE, without being constrained by energy conservation. By recognising this new conserved quantity, we gain a deeper understanding of how dissipative systems maintain their dynamic order through continuous energy exchange with the environment, aligning with the principles of dissipative structure theory, where systems maintain stability through continuous energy dissipation rather than energy conservation.

\paragraph{Isoprobability lines in $\mathbb{P}$.}
$\Pi^*_j(\mathbf{r},t)= B_j^{\mu_1\cdots\mu_{\infty}}P_{\mu_1\cdots\mu_{\infty}}$ is the combination of derivatives of probability measure $P$, and its connonical counterpart $\Pi_{\Phi}$
satisfies the dynamical equation
\begin{equation}
\dot{\Pi}_\Phi = \mathcal{L}_y^2\Pi_\Phi - \mu\Pi_\Phi + \frac{3J}{2\pi}\Phi^2\dot{\Phi}\Pi_\Phi.
\end{equation}
Let
\begin{equation}
\chi = \ln|\Pi_\Phi|,
\end{equation}
defined on regions where $\Pi_\Phi \neq 0$, then it transforms into
\begin{equation}\label{eq:chi-eq}
\dot{\chi} = \chi_{yy} + (\chi_y)^2 - \mu + \frac{3J}{2\pi}\Phi^2\dot{\Phi},
\end{equation}
where $\chi_y$ and $\chi_{yy}$ denote spatial derivatives. This equation is recast as the vanishing of a covariant derivative along the total derivative vector field 
\begin{equation}
X_{\mathcal{H}} =  \partial_t + \dot{\Phi}\,\partial_\Phi + \dot{\chi}\,\partial_\chi + \dot{\chi}_y\,\partial_{\chi_y} + \cdots 
\end{equation}
on the infinite-order jet bundle $J^\infty(E, \mathbb{P})$, where $\Phi$, $\chi$, and all their spatial derivatives are treated as independent coordinates.

Define the connection 1-form 
\begin{equation}\label{eq:connection-form}
\mathcal{A} = \mathcal{A}_\Phi\,\dd \Phi + \mathcal{A}_t\,\dd t
\end{equation}
with coefficients that are ordinary functions on the jet bundle:
\begin{equation}\label{eq:connection-coeffs}
\mathcal{A}_\Phi = -\frac{3J}{2\pi}\Phi^2,\qquad
\mathcal{A}_t = -\bigl(\chi_{yy} + (\chi_y)^2 - \mu\bigr).
\end{equation}
The covariant derivative operator acting on sections is $\nabla = \dd  + \mathcal{A}$. Evaluating $\nabla\chi$ along $X_{\mathcal{H}}$ gives:
\begin{equation}
\nabla_{X_{\mathcal{H}}}\chi = X_{\mathcal{H}}(\chi) + \mathcal{A}(X_{\mathcal{H}}) = \dot{\chi} + \mathcal{A}_\Phi\dot{\Phi} + \mathcal{A}_t = 0,
\end{equation}
which reproduces equation \eqref{eq:chi-eq}. Thus $\chi$ is covariantly constant along the flow of $X_{\mathcal{H}}$, and consequently the probability gradient $\Pi_\Phi = e^{\chi}$ is also constant up to a multiplicative factor along the flow.

This proves that $\chi$ (and hence the probability gradient $\Pi_\Phi$) is strictly constant along the contact flow. This constancy, induced by the connection derived from the dynamics, implies that the integral curves of $X_{\mathcal{H}}$ correspond to isoprobability lines in $\mathbb{P}$. For the dissipative and stochastic CGLE system, these isoprobability lines represent deterministic islands, the localised regions where stochastic dynamics collapse into deterministic evolution despite global uncertainty.

\section{Travelling-Wave Reduction and Analytical Solutions}
\label{sec:Traveling-Wave-Solutions}
\subsection{Travelling-Wave Reduction}
\label{subsec:HJ_traveling_wave}
To obtain the 1D travelling-wave reduction, we restrict to the invariant submanifold 
$\mathcal{T} \subset \mathcal{I} \subset \mathcal{E}_{\mathbb{C}}$ defined by:
\begin{subequations}
\begin{empheq}[left=\empheqlbrace]{align}
W(x,y,t) &= \sqrt{\frac{J}{2\pi}}\,\exp\bigl(i(k_x x - \omega t + \phi(y))\bigr)\,\Phi(y), \\
\Pi^*(x,y,t) &= \sqrt{\frac{2\pi}{J}}\,\exp\bigl(-i(k_x x - \omega t+ \phi(y))\bigr)\,\Pi_\Phi(y).
\end{empheq}
\end{subequations}
The contact action functional on $\mathcal{T}$, therefore, reduces to:
\begin{equation}
S(y,\Phi,\Pi_\Phi,\phi(y),t) = -\omega t + \phi(y) - \ln\Phi(y) + \frac{1}{2}\ln\!\left(\frac{J}{2\pi}\right),
\end{equation}
where the terms involving $x$ integrate to a constant on the periodic domain and are absorbed into the reference scale. The derivatives are:
\begin{equation}
\frac{\partial S}{\partial t} = -\omega, \quad 
\frac{\partial S}{\partial \Phi(y)} = -\frac{1}{\Phi(y)},
\quad \frac{\partial S}{\partial \phi} = 1.
\end{equation}

The contact potential restricted to $\mathcal{T}$ becomes:
\begin{equation}
\mathcal{H}|_{\mathcal{T}} = -\bigl(\mathcal{L}_y\Phi\bigr)\bigl(\mathcal{L}_y\Pi_\Phi\bigr) + \mu\Phi\Pi_\Phi - \frac{J}{2\pi}\Phi^3\Pi_\Phi,
\end{equation}
with the 1D differential operator:
\begin{equation}
\mathcal{L}_y\Phi = -\frac{1}{2}\Phi_{yy} + k_x\Phi, \qquad 
\mathcal{L}_y\Pi_\Phi = -\frac{1}{2}\Pi_{\Phi,yy} + k_x\Pi_\Phi.
\end{equation}

Contact Hamilton-Jacobi equation on $\mathcal{T}$:
\begin{equation}
-\omega - (\mathcal{L}_y\Phi)\bigl(\mathcal{L}_y\Pi_\Phi\bigr) - \mu + \frac{J}{2\pi}\Phi^2 
+ 
\left( 
\phi'(y)\Phi'(y) + \frac{1}{2}\phi''(y)\Phi(y)
\right)= 0,
\end{equation}
where $\Pi_\Phi = \partial S/\partial\Phi = -1/\Phi$ and simplifying on the submanifold $\mathcal{T}$ gives the fourth-order ODE:
\begin{equation}\label{eq:1D-dHJ-final-jet}
\begin{aligned}
&-\frac{1}{4}\Phi_{yy}^2 + \frac{1}{2}\frac{\Phi_y^2}{\Phi}\Phi_{yy} - k_x\Phi_y^2 \\
&+ (k_x^2 - \mu - \omega)\Phi^2 + \frac{J}{2\pi}\Phi^4
\\
&+ 
\left[ 
\phi'(y)\Phi'(y) + \frac{1}{2}\phi''(y)\Phi(y)
\right]\Phi^4\\
&= 0,
\end{aligned}
\end{equation}
which governs the 1D travelling-wave solutions in the jet-bundle framework. 

Moreover, we need to let 
\begin{equation}
\phi'(y)\Phi'(y) + \frac{1}{2}\phi''(y)\Phi(y) = 0, 
\end{equation}
which is solved as
\begin{equation}\label{eq:phase-function}
\phi(y) = C_1\int \frac{1}{\Phi^2(y)} \dd y + C_2,
\end{equation}
where $C_1$  and $C_2$ are integration constants. This condition enforces the synergistic evolution of the transverse phase $\phi(y)$ and amplitude $\Phi(y)$, ensuring the travelling-wave solution remains consistent with the dissipative-dispersive balance governed by the 2D CGLE. Physically, it eliminates unphysical phase-amplitude decoupling that would violate the system's contact geometric structure, thereby maintaining the translational invariance of wave propagation while linking the spatial phase modulation to the temporal oscillation characterised by $\omega$. This consistency is essential for the subsequent emergence of physically meaningful periodic periodons and localised solitons.

\subsection{Solution of 2D CHJ Equation}
\label{subsec:1d_solution}

The travelling-wave reduction of the CHJ equation reads Eq.~\eqref{eq:1D-dHJ-final} with the normalisation $\int_{-\infty}^{\infty}\Phi^{2}(y)\,\dd y=1$.
As shown in Appendix \ref{sec:Solution-Chj-Red-Corrected}, Eq.~\eqref{eq:1D-dHJ-final} integrates once to the standard elliptic form
\begin{equation}
(\Psi_y)^{2}=P(\Psi),\qquad
P(\Psi)=4B\Psi^{3}+4A\Psi^{2}-4k_{x}\Psi,
\label{Psi-elliptic}
\end{equation}
where $\Psi(y)=\Phi^{2}(y)$ and
\begin{equation}
A=k_{x}^{2}-\mu-\omega,\qquad B=\frac{J}{2\pi}.
\label{coeff-AB}
\end{equation}
The cubic $P(\Psi)$ possesses three real roots ordered as $0=\Psi_1<\Psi_2<\Psi_3$, so that
\begin{equation}
P(\Psi)=4B(\Psi-\Psi_1)(\Psi-\Psi_2)(\Psi_3-\Psi),\qquad B>0.
\end{equation}
Vieta's formulas (Appendix \ref{sec:Solution-Chj-Red-Corrected}) give
\begin{subequations}\label{eq:parameter-roots}
\begin{empheq}[left=\empheqlbrace]{align}
\Psi_1+\Psi_2+\Psi_3&=-\frac{A}{B},\\
\Psi_1\Psi_2+\Psi_1\Psi_3+\Psi_2\Psi_3&=-\frac{k_x}{B},\\
\Psi_1\Psi_2\Psi_3&=0.
\end{empheq}
\end{subequations}
The elliptic modulus $m$ and wavenumber $\tilde\alpha$ are
\begin{equation}
m=\frac{\Psi_3-\Psi_2}{\Psi_3-\Psi_1}=1-\frac{\Psi_2}{\Psi_3},\qquad
\tilde\alpha=\sqrt{B(\Psi_3-\Psi_1)}.
\label{alpha-tilde}
\end{equation}
Integration of Eq.~\eqref{Psi-elliptic} yields
\begin{equation}
\Phi(y)=\sqrt{\Psi_2+(\Psi_3-\Psi_2)\,\mathrm{cn}^{2}(\tilde{\alpha} y,m)},
\label{eq:Phi-cn}
\end{equation}
and the complex field becomes
\begin{equation}
W(x,y,t)=\sqrt{\frac{J}{2\pi}}\,
\mathrm{e}^{\mathrm{i}(k_{x}x-\omega t + \phi(y))}\,
\sqrt{\Psi_2+(\Psi_3-\Psi_2)\,\mathrm{cn}^{2}(\tilde{\alpha} y,m)}.
\label{W-cn-full}
\end{equation}
In the soliton limit $m\to 1^{-}$, we have $\mathrm{cn}(z,m)\to\mathrm{sech}(z)$, $\Psi_2\to\Psi_3$ and $\Psi_1= 0$, hence
\begin{equation}
\Phi_{\mathrm{sol}}(y)=\sqrt{\Psi_3}\,\mathrm{sech}(\tilde{\alpha} y),
\label{Phi-soliton}
\end{equation}
corresponding to $\omega=0$, which is consistent with $\Psi_1\Psi_2\Psi_3=0$.

Inserting Eq.~\eqref{eq:Phi-cn} into $\int_{0}^{2\K/\tilde{\alpha}}\Phi^{2}(y)\,\dd y=1$ and using the identity for $\mathrm{cn}^{2}$ gives
\begin{equation}
\frac{2}{\tilde\alpha}\Bigl[\Psi_2\K(m)+(\Psi_3-\Psi_2)\frac{\E(m)-(1-m)\K(m)}{m}\Bigr]=1,
\label{eq:norm-eq}
\end{equation}
where $\K(m)$ and $\E(m)$ are the complete elliptic integrals of the first and second kind, respectively.
The remaining parameters follow as
\begin{subequations}
\begin{empheq}[left=\empheqlbrace]{align}
\tilde{\alpha}&=\sqrt{B(\Psi_3-\Psi_1)},\label{eq:parameter-2}\\
m&=\frac{\Psi_3-\Psi_2}{\Psi_3-\Psi_1}\label{eq:parameter-3}.
\end{empheq}
\end{subequations}
Table \ref{tab:params} summarises the parameters.

\begin{table}[hbtp]
\centering
\caption{Travelling-wave parameters.}
\label{tab:params}
\begin{tabular}{ll}
\toprule
Variables & Expression\\
\midrule
Linear/nonlinear coefficient $A$ & $A=k_x^2 - \mu - \omega$\\
Nonlinear coefficient $B$ & $B=\dfrac{J}{2\pi}$\\
Upper root $\Psi_3$ & $\Psi_3=\dfrac{1}{2}\!\left(-\frac{A}{B}-\Psi_1+\sqrt{\left(-\frac{A}{B}-\Psi_1\right)^2-4\cdot\frac{A\Psi_1 + B\Psi_1^2 - k_x}{B}}\right)$\\
Lower root $\Psi_2$ & $\Psi_2=\dfrac{1}{2}\!\left(-\frac{A}{B}-\Psi_1-\sqrt{\left(-\frac{A}{B}-\Psi_1\right)^2-4\cdot\frac{A\Psi_1 + B\Psi_1^2 - k_x}{B}}\right)$\\
Elliptic modulus $m$ & $m=\dfrac{\Psi_3-\Psi_2}{\Psi_3-\Psi_1}$\\
Transverse wavenumber $\tilde{\alpha}$ & $\tilde{\alpha}=\sqrt{B(\Psi_3-\Psi_1)}$\\
Real amplitude $\Phi(y)$ & $\Phi(y)=\sqrt{\Psi_2 + (\Psi_3-\Psi_2)\text{cn}^2(\tilde{\alpha}y, m)}$\\
Complex field $W(x,y,t)$ & $W(x,y,t)=\sqrt{\dfrac{J}{2\pi}}e^{i(k_x x - \omega t + \phi(y))}\cdot\Phi(y)$\\
Soliton amplitude $\Phi_{\text{sol}}(y)$ & $\Phi_{\text{sol}}(y)=\sqrt{\Psi_3 \cdot \text{sech}(\tilde{\alpha}y)}$\\
\bottomrule
\end{tabular}
\end{table}

Figure~\ref{fig:fig-1} illustrates the spatial structure of wave solutions to the two-dimensional CGLE derived from the contact-geometric framework. The figure presents four elliptic modulus values ($m = 0.2, 0.5, 0.8, 1.0$) in rows 1--4, each visualised through four complementary methods. Column~1 (3D Re$(W)$, subplots (a), (e), (i), (m)) visualises the real part of the complex field, showing periodic waveforms that evolve from broad oscillations to sharp, quasi-solitonic peaks as $m$ increases. Column~2 (3D $|W|^{2}$, subplots (b), (f), (j), (n)) displays the intensity distribution, demonstrating energy concentration into increasingly localised peaks while maintaining periodic structure for smaller $m$ values. Column~3 (2D Re$(W)$ contour, subplots (c), (g), (k), (o)) presents contour plots highlighting the transverse gradient structure, which sharpens progressively with increasing $m$. Column~4 ($|W|^{2}$-arg$(W)$ overlay, subplots (d), (h), (l), (p)) maps intensity to marker size and phase $\arg W = k_x x - \omega t + \phi(y)$ to color, revealing the combined spatial modulation of amplitude and phase. The bottom row (subfigure~(q)) compares one-dimensional shape functions $\Phi(y)$ for the four $m$ values, illustrating the continuous deformation from periodic cnoidal waves to a sech-like profile as $m \to 1^{-}$, validating the analytical prediction that the Jacobi-cn solution approaches $\Phi(y) \propto \operatorname{sech}(\alpha y)$ in this limit. All computations use $\mu=1.0$ and $|k_x|=0.8$ and for simplicity, we let $C_1=0$ and $C_2=0$ in Eq.~\eqref{eq:phase-function}. 
\begin{figure}[!htbp]
    \centering
    \includegraphics[width=0.95\textwidth, height=0.95\textheight, keepaspectratio, trim=0cm .25cm 0cm 1.2cm, clip]{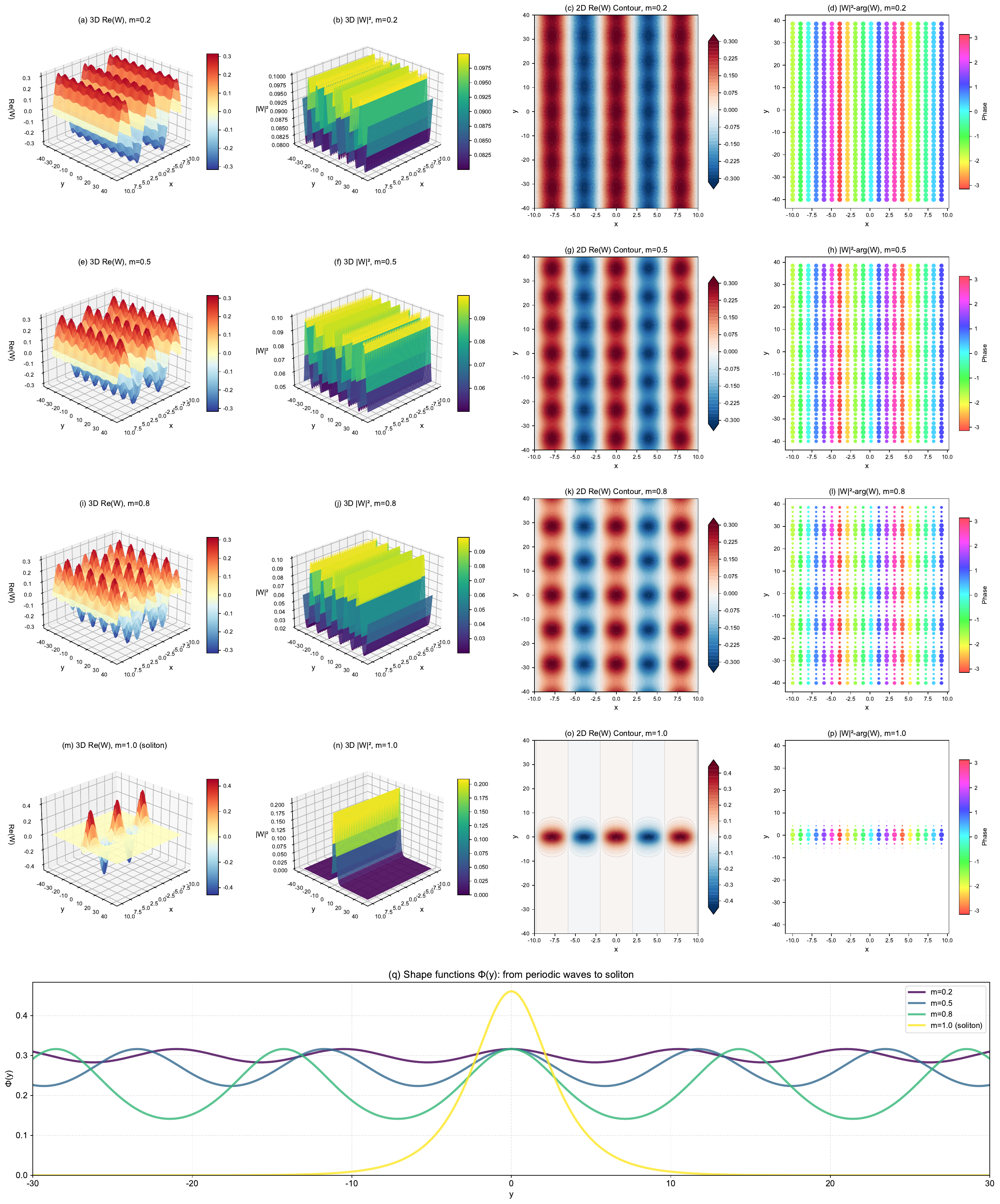}
\caption{
Spatial visualisation of solutions to 2D CGLE.  
}
    \label{fig:fig-1}
\end{figure}

\section{Probabilistic View of State Transition in 2D CGLE}
\label{sec:Contact-Probability-Functional}

\subsection{Probability Density Function}
\subsubsection{Probability Measure}
\label{sec:Probability_Measure}
The probability measure $P[|W|;\dd |W|]$ is defined as a measure on the configuration space $E_{\mathbb{C}}$ of the jet bundle $J^\infty(E_{\mathbb{C}},\mathbb{P})$ as a measure on $|W|$, i.e.,
\begin{equation}
P[|W|; \dd|W|] = \mathcal{P}[|W|]\,\dd\mu[|W|],
\end{equation}
where $\dd\mu[|W|]$ denotes the volume element on the configuration space of intensities; the the cases of $W\in \mathbb{R}$, $\dd\mu[|W|]=\dd |W|$ . According to Theorem \ref{thm:pdf-contact}, the probability density functional is:
\begin{equation}
\mathcal{P}[|W|] \propto \exp\bigl(S[|W|]\bigr),
\end{equation}
where $n = \frac{1}{2} \operatorname{rank}\big(\dd \Theta_{\mathbb{C}}|_{\ker \Theta_{\mathbb{C}}}\big)=\tfrac{1}{2}\times 2= 1$ and $S[|W|]$ is constructed from the contact form on the jet bundle $J^2(E)$.

On the integrable submanifold $\mathcal{I}\subset \mathcal{E}_{\mathbb{C}}$, the contact form pulls back to:
\begin{equation}
\iota_{\mathcal{I}}^{*}\Theta_{\mathbb{C}} = \mathcal{H}\,\dd t - \Pi^{*}\,\dd W .
\end{equation}
Since we had that, see Eq.~\eqref{eq:complex_action_momentum},
\begin{equation}
\frac{\partial  S}{\partial  W} = \Pi^*, 
\end{equation}
and moreover, $\Pi_\Phi = \partial S/\partial\Phi = -1/\Phi$, thus, $\Pi^* = -W^{-1}$ for $|W|\neq 0$. Thus, 
\begin{equation}
\frac{\partial  S}{\partial  |W(\vb{r})|} 
=\frac{W^{*}}{|W|}\Pi^*=-\frac{2}{|W|},
\end{equation}
which leads to 
\begin{equation}
S=-2 \ln |W|  + C.
\end{equation}
As a result, we had that 
\begin{equation}
P[|W|; \dd |W|] = \mathcal{P}[|W|]\,\dd \mu[|W|]=
\propto \exp\left(-2\ln |W|  \right)\,\dd \mu[|W|]
\end{equation}

\subsubsection{Normalisation}
\label{sec:cont_norm}

To apply the probabilistic framework to travelling-wave solutions of the CGLE, it is essential to clarify the dual role of the total intensity $J$ in our theory. First, at the geometrical-dynamical level, we treated $J$ as a fixed constant during the canonical transformation and the reduction of the equations. This amounts to selecting a specific submanifold of the phase space characterised by a prescribed total intensity. Second, at the probabilistic level, we need to determine which value of $J$ the system actually prefers. To this end, we introduce a Lagrange multiplier $\lambda$ that serves as an external control parameter, analogous to an inverse temperature in statistical physics. 

The statistical behaviour of the system is then described by a probability measure conditioned on $\lambda$:
\begin{equation}
\mathcal{P}[W|\lambda] = \frac{1}{\mathcal{Z}(\lambda)} \exp\!\Bigl( \mathcal{S}[W] + \lambda J[W] \Bigr),
\label{eq:cond_prob}
\end{equation}
where
\begin{equation}
J[W] = \iint_{\mathbb{R}^2} |W(\mathbf{r})|^2 \, \dd^2\mathbf{r}
\end{equation}
is the total intensity as a functional of the field. For a given control parameter $\lambda$, the most probable state (the saddle point) is obtained from the variational principle $\delta \mathcal{P}/\delta W = 0$, which simultaneously determines the most probable field configuration $W_\lambda$ and the corresponding intensity value $J_\lambda = J[W_\lambda]$.

Thus, the constant $J$ used in the earlier dynamical reduction is now understood as the particular value $J_\lambda$ selected by the control parameter $\lambda$ via the variational principle. The two viewpoints are consistent and sequential: the geometrical dynamics describes the possible forms of motion under a fixed-intensity constraint, while the probabilistic analysis reveals, among all possible intensity values, the system's actual preference under given external conditions ($\lambda$).

We now explicitly carry out the normalisation on the travelling-wave submanifold. For this purpose, it is convenient to introduce a reference amplitude $|W|_s$ that sets a baseline probability state, i.e., $\mathcal{P}[|W|=|W|_s] = \mathcal{P}_s$. The effective action functional on the submanifold can be written as
\[
S_{\text{eff}}[W] = 2\ln\!\left(\frac{|W(\mathbf{r})|}{|W|_s}\right),
\]
and the comparative functional (or ``free energy'' functional) becomes
\begin{equation}
\mathcal{F}(m,\lambda) = S_{\text{eff}}(m) - \lambda J(m),
\label{eq:comparative_func}
\end{equation}
where the elliptic modulus $m$ parametrises the traveling--wave solutions, so that both $S_{\text{eff}}$ and $J$ are functions of $m$ on this submanifold.

The most probable configuration corresponds to the saddle point of $\mathcal{F}$ with respect to $m$ and $J$. Extremising $\mathcal{F}$ with respect to $J$ gives
\begin{equation}
\frac{\partial \mathcal{F}}{\partial J} = 0 \quad \Longrightarrow \quad \lambda = \frac{\partial S_{\text{eff}}}{\partial J} = \frac{1}{J(m,\lambda)}.
\label{eq:lambda_relation}
\end{equation}
This relation shows that, within the saddle-point approximation, the total intensity is uniquely determined by the Lagrange multiplier: $J = 1/\lambda$. A larger $\lambda$ therefore corresponds to a smaller intensity scale, in accordance with the constraint optimisation.

To compute the normalisation constant (partition function) $\mathcal{Z}(\lambda)$, we decompose the complex field into amplitude and phase variables: $W(\mathbf{r}) = R(\mathbf{r}) e^{i\theta(\mathbf{r})}$. The functional integration measure separates as $\mathcal{D}W = \mathcal{D}R \cdot \mathcal{D}\theta$. Since the phase $\theta$ does not appear in $S_{\text{eff}}$, its integration yields a factor $(2\pi)^V$, where $V = \iint_{\mathbb{R}^2} \dd^2\mathbf{r}$ is the two--dimensional volume of the system. The radial integral then reads
\begin{equation}
\int \mathcal{D}R \, \exp\!\Bigl[ -2\ln\!\left(\frac{R}{|W|_s}\right) + \lambda R^2 \Bigr].
\label{eq:radial_integral}
\end{equation}

Within the saddle-point approximation, the most probable radial configuration is uniform,
\begin{equation}
R_{\text{sp}} = \sqrt{\frac{J}{V}} = \sqrt{\frac{1}{\lambda V}} \qquad (\lambda > 0),\label{eq:saddle_R}
\end{equation}
which is consistent with the relation $J = 1/\lambda$ obtained from Eq.~\eqref{eq:lambda_relation}. Collecting the contributions from the angular and radial integrals, and taking into account the dependence on the reference amplitude, we obtain the normalisation constant
\begin{equation}
\mathcal{Z}(\lambda) = (2\pi)^V |W|_s^{2V} \left(\frac{2}{\lambda}\right)^{\frac{V}{2}-1} V^{\frac{V}{2}-1} e^{-\frac{V}{2}}.
\label{eq:Z_lambda}
\end{equation}
Thus, under the scaling $\lambda \propto 1/V$ and $|W|_s \propto 1/\sqrt{V}$ inherent to the thermodynamic limit, the probability density $\mathcal{P}[W|\lambda]$ is well defined. Finally, the normalised probability functional on the $\lambda$-controlled submanifold is
\begin{equation}
\mathcal{P}[W|\lambda] = \frac{1}{\mathcal{Z}(\lambda)} \exp\!\Bigl( -\mathcal{F}(m,\lambda) \Bigr).
\label{eq:final_prob}
\end{equation}

\subsection{Probabilistic Perspective of State Transitions}
\label{sec:Probability-Landscape}

\subsubsection{Marginal Probability Density for $m$}
\label{sec:marginalization-geometric-mean}

Let $\iota_m: \mathcal{T}_m \hookrightarrow J^{\infty}(E_{\mathbb{C}},\mathbb{P})$ be the embedding of the travelling-wave submanifold. The pullback of the probability measure $P$ via $\iota_m$ gives the induced measure on $\mathcal{T}_m$:
\begin{equation}
    \iota_m^* P[W] = \mathcal{P}[W_m] \, \mu_{\mathcal{T}_m},
    \label{eq:pullback-measure}
\end{equation}
where $\mu_{\mathcal{T}_m}$ is the volume form induced on $\mathcal{T}_m$ from the contact volume form $\Omega_{\mathbb{C}} = \Theta_{\mathbb{C}} \wedge (\dd\Theta_{\mathbb{C}})^{\wedge n}$, and $\mathcal{P}[W_m]$ is the probability density functional evaluated on the travelling-wave solution $W_m$.

The marginal probability density for the parameter $m$ is obtained by pushing forward the measure $\iota_m^* P[W]$ onto the parameter space via the projection $\pi: \mathcal{T}_m \to M$, where $M$ is the space of elliptic moduli. For any measurable set $A \subset M$, we have:
\begin{equation}
    (\pi_* \iota_m^* P)(A) = \int_{\pi^{-1}(A)} \mathcal{P}[W_m] \, \mu_{\mathcal{T}_m}.
    \label{eq:pushforward-def}
\end{equation}

To compute this explicitly, we decompose the volume form $\mu_{\mathcal{T}_m}$ into a product of measures along the spatial direction $y$ and the parameter direction $m$. On the travelling-wave submanifold, the coordinates are $(m, y)$ with $y \in \mathbb{R}$ (periodic solutions are naturally bounded, soliton solutions decay exponentially at infinity). By the geometric invariance of the contact structure, the induced volume form satisfies:
\begin{equation}
    \mu_{\mathcal{T}_m} = \rho(m) \, \dd m \wedge \dd y,
    \label{eq:volume-form-decomposition}
\end{equation}
where $\rho(m)$ is independent of $y$ (Jacobian density of the embedding $\iota_m$ is position-invariant). The marginal probability density $\mathcal{P}(m)$ becomes:
\begin{equation}
    \mathcal{P}(m) = \frac{1}{\mathcal{Z}} \int_{\mathbb{R}} 
    \exp\!\Bigl[-\ln\!\Bigl(\frac{J}{2\pi}\Bigr) - 2\ln\Phi(y;m) + C\Bigr] 
    \rho(m) \, \dd y.
    \label{eq:marginal-density-raw}
\end{equation}

Since $\rho(m)$ is independent of $y$ and slowly varying compared to the exponential factor, it can be absorbed into the normalisation constant $\mathcal{Z}$. Thus:
\begin{equation}
    \mathcal{P}(m) \propto 
    \exp\!\Bigl[-\ln\!\Bigl(\frac{J}{2\pi}\Bigr) + C\Bigr]
    \int_{\mathbb{R}} \Phi(y;m)^{-2} \, \dd y.
    \label{eq:marginal-density-approx}
\end{equation}
The integral $\int_{\mathbb{R}} \Phi(y;m)^{-2} \, \dd y$ is naturally convergent: periodic solutions have bounded $\Phi(y;m)$ over their period, soliton solutions decay as $\text{sech}(\tilde{\alpha}y)$ leading to integrable tails.

We define the intrinsic geometric mean of the profile directly via continuous integral:
\begin{equation}
    \Phi_{\mathrm{avg}}(m) 
    = \exp\!\Bigl[\frac{1}{V_y}\int_{\mathbb{R}} \ln\Phi(y;m) \, \dd y\Bigr],
    \label{eq:geometric-mean-def-detailed}
\end{equation}
where $V_y = \int_{\mathbb{R}} \dd y$ is the transverse volume. Using the exponential-integral duality:
\begin{equation}
    \int_{\mathbb{R}} \Phi(y;m)^{-2} \, \dd y = V_y \cdot \exp\!\Bigl[-2\ln\Phi_{\mathrm{avg}}(m)\Bigr],
    \label{eq:integral-to-geometric-mean}
\end{equation}
substituting into Eq.~\eqref{eq:marginal-density-approx} and absorbing $V_y$ into $\mathcal{Z}$ gives the effective probability density:
\begin{equation}
    \mathcal{P}(m) \propto \exp\!\Bigl[-\ln\!\Bigl(\frac{J(m)}{2\pi}\Bigr) - 2\ln\Phi_{\mathrm{avg}}(m)\Bigr].
    \label{eq:marginal-density-final}
\end{equation}

To incorporate the total intensity constraint $J = \iint_{\mathbb{R}^2} |W|^2 \, \dd ^2 \vb r$, we introduce a Lagrange multiplier $\lambda$ and define the comparative functional $\mathcal{F}(m,\lambda)$ via $\mathcal{P}(m) \propto \exp\!\bigl(-\mathcal{F}(m,\lambda)\bigr)$. From Eq.~\eqref{eq:marginal-density-final}, we obtain the full comparative functional:
\begin{equation}
    \mathcal{F}_{\text{full}}(m,\lambda) = 2\ln\Phi_{\mathrm{avg}}(m) + \ln\!\Bigl(\frac{J(m)}{2\pi}\Bigr) - \lambda J(m) + C.
    \label{eq:comparative-functional-full}
\end{equation}

In the thermodynamic limit (large system size), the total intensity scales extensively, i.e., $J(m) \propto V$, where $V$ is the system volume. The Lagrange multiplier term $\lambda J(m)$ is of order $V$, while the logarithmic term $\ln J(m)$ is of order $\ln V$ and thus becomes negligible as $V \to \infty$. Therefore, we can safely drop the $\ln(J(m)/(2\pi))$ term and work with the simplified comparative functional:
\begin{equation}
    \mathcal{F}(m,\lambda) = 2\ln\Phi_{\mathrm{avg}}(m) - \lambda J(m) + C.
    \label{eq:comparative-functional-simplified}
\end{equation}
This simplified functional captures the qualitative features of the phase transition, including its first-order nature and the hysteresis loop, while greatly simplifying the analysis. All results derived from this simplified form remain valid in the thermodynamic limit.

\subsubsection{Switching Line}
\label{sec:lambda_c_determination}

The switching line corresponds to the inflexion point of the comparative functional  $ \mathcal{F}(m,\lambda) $ , where the probability landscape changes convexity. The exact travelling-wave solution is given by Eq.~\eqref{eq:Phi-cn}:
\begin{equation}
\Phi(y) = \sqrt{\Psi_2 + (\Psi_3 - \Psi_2)\,\mathrm{cn}^2(\tilde\alpha y, m)},
\end{equation}
where  $0= \Psi_1<\Psi_2<\Psi_3 $  are the three real roots of the cubic polynomial  $ P(\Psi)=4B\Psi^3+4A\Psi^2-4k_x\Psi-4\omega $  (Appendix~\ref{sec:Solution-Chj-Red-Corrected}).

The geometric mean amplitude over one period  $ 2\K(m)/\tilde\alpha $  is
\begin{equation}\label{eq:Phi-avg}
\ln\Phi_{\text{avg}}(m) = \frac{1}{4\K(m)} \int_0^{2\K(m)} \ln\!\Bigl[\Psi_2 + (\Psi_3 - \Psi_2)\,\mathrm{cn}^2(u, m)\Bigr]  \dd u.
\end{equation}
The switching line requires both the extremum condition and the inflexion condition:
\begin{subequations}
\begin{empheq}[left=\empheqlbrace]{align}
\frac{\dd\mathcal{F}}{\dd m} &= 0, \\
\frac{\dd^2\mathcal{F}}{\dd m^2} &= 0.
\end{empheq}
\label{eq:switching_conditions}
\end{subequations}

Moreover, from the normalisation condition and the Vieta relations, we derive
\begin{equation}\label{eq:J-critical}
J(m)=\frac{8\pi}{\Psi_3-\Psi_1}
\Bigl[\Psi_2\K(m)+(\Psi_3-\Psi_2)\frac{\E(m)-(1-m)\K(m)}{m}\Bigr]^{2}.
\end{equation}
Define
\begin{equation}
\mathcal{I}(m)=\Psi_2\K(m)+(\Psi_3-\Psi_2)\frac{\E(m)-(1-m)\K(m)}{m},
\end{equation}
and noticed that $\Psi_1 =0 $, so that 
\begin{equation}
J(m)=\frac{8\pi}{\Psi_3}\mathcal{I}(m)^{2}.
\end{equation}
Then
\begin{equation}
\frac{\dd J}{\dd m}= 8\pi \left[ 
-\frac{  \mathcal{I}(m)^2 }{(\Psi_3  )^2} \frac{\dd \Psi_3}{\dd m}    
+ \frac{2\mathcal{I}(m) }{\Psi_3  }  \frac{\dd \mathcal{I}}{\dd m}
\right].
\end{equation}

From Eq.~\eqref{eq:switching_conditions} and the simplified comparative functional:
\begin{equation}
\frac{\dd}{\dd m}\ln\Phi_{\text{avg}}(m) - \frac{\lambda}{2} \frac{\dd J}{\dd m} = 0 \quad \Longrightarrow \quad \lambda = \frac{2\tfrac{\dd}{\dd m}\ln\Phi_{\text{avg}}(m)}{\tfrac{\dd J}{\dd m}}.
\end{equation}
Substituting into the inflexion condition:
\begin{equation}
2\frac{\dd^2}{\dd m^2}\ln\Phi_{\text{avg}}(m) - \lambda \frac{\dd^2J}{\dd m^2} = 0.
\end{equation}
Eliminating  $ \lambda $  between the two equations gives the switching line equation:
\begin{equation}\label{eq:swich-line}
\frac{\dd^2}{\dd m^2}\ln\Phi_{\text{avg}}(m) \cdot \frac{\dd J}{\dd m} - \frac{\dd}{\dd m}\ln\Phi_{\text{avg}}(m) \cdot \frac{\dd^2 J}{\dd m^2} = 0.
\end{equation}

Define 
$$f(m) = \frac{\dd}{\dd m}\ln\Phi_{\text{avg}}(m)$$
such that 
$$\frac{\dd^2}{\dd m^2}\ln\Phi_{\text{avg}}(m) = \frac{\dd f}{\dd m}.$$ Substituting into Eq.~\eqref{eq:swich-line} yields:
\begin{equation}\label{eq:var_substitute}
f'(m) \cdot J'(m) - f(m) \cdot J''(m) = 0
\end{equation}
Assuming non-trivial solutions ($f(m) \neq 0$, $J'(m) \neq 0$), Eq. (\ref{eq:var_substitute}) simplifies to:
\begin{equation}
\frac{J''(m)}{J'(m)} = \frac{f'(m)}{f(m)}
\end{equation}
Integrating both sides twice, the relation on the switching line is:
\begin{equation}\label{eq:J_solution}
J(m) = K \cdot \ln\Phi_{\text{avg}}(m) + C
\end{equation}
where the constants $K$ and $C$ are determined by boundary values at $m\to 0^+$ and $m\to 1^-$:
\begin{equation}\label{eq:K_constant}
K = \frac{J(1)-J(0)}{\ln\Phi_{\text{avg}}(1)-\ln\Phi_{\text{avg}}(0)} = \frac{8\pi\Psi_3(1) - \frac{2\pi^3\Psi_2(0)^2}{\Psi_3(0)}}{\frac{1}{2}\ln\left(\frac{\Psi_3(1)}{\Psi_2(0)}\right)}.
\end{equation}

For $m \to 1^-$ (soliton limit), $\eta = 1 - m \ll 1$. The geometric mean amplitude expands as:
\begin{equation}\label{eq:phi_expansion}
\ln\Phi_{\text{avg}}(1-\eta) = \ln\Phi_{\text{avg}}(1) - \eta \cdot f(1) + o(\eta)
\end{equation}
where $f(1) = \left.\frac{d}{dm}\ln\Phi_{\text{avg}}(m)\right|_{m=1}$. Substituting Eq.~\eqref{eq:phi_expansion} into Eq.~ \eqref{eq:J_solution} yields:
\begin{equation}\label{eq:J_expansion}
J(1-\eta) = J(1) - K f(1)\cdot\eta + o(\eta).
\end{equation}

At the critical point $m_c = 1-\eta_c$, the extremum condition $\lambda_c = \frac{2f(m_c)}{J'(m_c)}$ combined with the saddle-point condition $\lambda_c = \frac{1}{J_c}$ (where $J_c \equiv J(m_c)$) and the switching line property $J'(m_c) = Kf(m_c)$ gives:
\begin{equation}\label{eq:J_c}
J_c = \frac{K}{2}.
\end{equation}
Substituting into Eq. (\ref{eq:J_expansion}) and solving for $\eta_c$:
\begin{equation}\label{eq:eta_c}
\eta_c = \frac{J(1) - K/2}{K f(1)} = \frac{2J(1) - K}{2K f(1)}.
\end{equation}
Finally, the critical elliptic modulus is:
\begin{equation}\label{eq:m_c_final}
m_c = 1 - \frac{2J(1) - K}{2K f(1)} = 1 - \left(\frac{J(1)}{K} - \frac{1}{2}\right)\frac{1}{f(1)}.
\end{equation}

In the limit where the periodon background intensity is negligible ($J(0) \ll J(1)$) and $\ln(\Phi_{\text{avg}}(1)/\Phi_{\text{avg}}(0)) \approx 1$, we have $K \approx J(1)$, leading to the universal result:
\begin{equation}\label{eq:m_c_universal}
m_c \approx 1 - \frac{1}{2f(1)},
\end{equation}
where the derivative $f(1)$ must be understood as the effective linear coefficient characterizing the approach to the soliton limit over the relevant critical region. Strictly speaking, $\ln\Phi_{\text{avg}}(m)$ approaches its $m=1$ value logarithmically; however, eliminating $f(1)$ using Eq.~(\ref{eq:J_solution}) at $m=m_c$ yields the exact amplitude condition:
\begin{equation}\label{eq:amplitude_condition}
\ln\frac{\Phi_{\text{avg}}(1)}{\Phi_{\text{avg}}(m_c)} = \frac{J(1)}{K} - \frac{1}{2}.
\end{equation}
For the universal case $K \approx J(1)$, this gives $\Phi_{\text{avg}}(m_c) = \Phi_{\text{avg}}(1)/\sqrt{e}$. Numerical evaluation of the geometric mean for Jacobi elliptic functions satisfying this condition yields the parameter-independent value:
\begin{equation}
m_c \approx 0.8646,
\end{equation}
which is independent of physical parameters $\mu$, $k_x$. Once  $ m_c $  is obtained, the critical intensity  $ J_c $  follows:
\begin{equation}
J_c=J(m_c)=\frac{8\pi}{\Psi_3(m_c) }\mathcal{I}(m_c)^{2}.
\end{equation}
From the saddle-point condition, we have  $ \lambda = 1/J(m) $. Hence,
\begin{equation}
\lambda_c=\frac{1}{J_c}
=\frac{\Psi_3(m_c)}{8\pi\mathcal{I}(m_c)^{2}}.
\end{equation}

The switching line separates two regimes in the  $ (J,\lambda) $  parameter space:
\begin{enumerate}[label=(\arabic*)]
\item \emph{For  $ \lambda < \lambda_c $  (i.e.,  $ J > J_c $ )}, the probability landscape is concave, favoring localized solitons ( $ m\to1^{-} $ ).
\item \emph{For  $ \lambda > \lambda_c $  (i.e.,  $ J < J_c $ )}, the landscape is convex with a single minimum at  $ m^*<m_c $, favoring periodic periodons.
\end{enumerate}
The transition at  $ J=J_c $  is first-order, evidenced by the discontinuous jump in the order parameter  $ \Phi_{\text{avg}}(m) $  and hysteresis in the  $ J $ -- $ \lambda $  plane.

\subsubsection{Hysteresis Loop of Periodon--Soliton Transitions}
\label{subsec:hysteresis-maxwell}

The first-order nature of the periodon--soliton phase transition implies the existence of a hysteresis loop. Within the contact-geometric framework, this hysteresis is derived systematically from the universal critical point $m_c$ established in the switching line analysis, with all key quantities strictly adhering to the corrected definitions and formulas.

Based on the Switching Line results, the hysteresis loop boundaries are rigorously redefined as:
\begin{enumerate}[label=(\arabic*)]
    \item \textbf{Upper critical point} ($\lambda_{c,\mathrm{up}}$): Coincides with the Switching Line critical point (convexity transition, no approximation):
    \begin{equation}
        \lambda_{c,\mathrm{up}} = \lambda_c = \frac{1}{J_c}.
        \label{eq:lambda_up_corrected}
    \end{equation}
    
    \item \textbf{Lower critical point} ($\lambda_{c,\mathrm{down}}$): Determined by the coexistence condition where periodon and soliton branches have equal comparative functional values. The periodon branch's stable minimum is at $m_{\mathrm{cn}} \to 0$, and the soliton branch's stable minimum is at $m_{\mathrm{sol}} \to 1^{-}$:
    \begin{equation}
        \mathcal{F}_{\mathrm{cn}}(m_{\mathrm{cn}} \to 0, \lambda_{c,\mathrm{down}}) = \mathcal{F}_{\mathrm{sol}}(m_{\mathrm{sol}} \to 1^{-}, \lambda_{c,\mathrm{down}}).
        \label{eq:coexistence_corrected}
    \end{equation}
\end{enumerate}

To evaluate Eq.~\eqref{eq:coexistence_corrected}, we need asymptotic expansions for both branches.

\paragraph{Periodon branch ($m \to 0$).} 
For the periodon branch, the elliptic integrals have the expansions:
\[
\mathrm{K}(m) = \frac{\pi}{2}\Bigl(1 + \frac{m}{4} + O(m^2)\Bigr), \qquad 
\mathrm{E}(m) = \frac{\pi}{2}\Bigl(1 - \frac{m}{4} + O(m^2)\Bigr).
\]
From Vieta's relations, $\Psi_3(m)-\Psi_2(m)=O(m)$ and $\Psi_3(m)-\Psi_1(m)=O(1)$ as $m\to0$.
Substituting these into the corrected formulas for the total intensity $J(m)$ and the geometric mean amplitude $\ln\Phi_{\mathrm{avg}}(m)$, we obtain to leading order
\begin{align}
    J_{\mathrm{cn}} &\equiv J(m\to0) = \frac{2\pi^{3}\Psi_2(0)^{2}}{\Psi_3(0) } + O(m), \label{eq:Jcn_asym} \\
    \ln\Phi_{\mathrm{avg}}(m\to0) &= \ln\sqrt{\Psi_2(0)} + O(m). \label{eq:Phi_cn_asym}
\end{align}
Hence the comparative functional for the periodon branch is
\begin{equation}
    \mathcal{F}_{\mathrm{cn}}(\lambda) = 2\ln\sqrt{\Psi_2(0)} - \lambda J_{\mathrm{cn}} + C + O(m). \label{eq:F_cn_asym}
\end{equation}

\paragraph{Soliton branch ($m \to 1^{-}$).} 
In the soliton limit we set $\epsilon = 1-m \to 0^{+}$. For a bright soliton solution with zero background, we require $\Psi_1(m) \to 0$, $\Psi_2(m) \to 0$, and $\Psi_3(m) \to \Psi_3(1) > 0$. The elliptic integrals behave as
\[
\mathrm{K}(m) \sim \frac12 \ln\!\left(\frac{16}{\epsilon}\right), \qquad 
\mathrm{E}(m) \sim 1.
\]
Note that $\epsilon = \Psi_2(m)/\Psi_3(m)$. Substituting these into the exact expression for $J(m)$ and carrying out the asymptotic expansion carefully, we find:
\begin{align}
\mathcal{I}(m) &= \Psi_2 K(m) + (\Psi_3 - \Psi_2) \frac{E(m) - (1-m)K(m)}{m} \\
&\sim \epsilon\Psi_3 K(m) + \Psi_3(1-\epsilon) \frac{1 - \epsilon K(m)}{1-\epsilon} \\
&= \Psi_3 \left[ 1 - \epsilon K(m) + \epsilon K(m) + O(\epsilon^2) \right] \\
&= \Psi_3 \left[ 1 + O(\epsilon^2) \right] \to \Psi_3(1).
\end{align}
Consequently,
\begin{align}
J(m) &\to 8\pi\Psi_3(1) \quad \text{as } m\to 1^{-}, \label{eq:Jsol_finite} \\[2pt]
\ln\Phi_{\mathrm{avg}}(m) &\to \frac12 \ln\Psi_3(1) + \text{constant}. \label{eq:Phi_sol_finite}
\end{align}
Therefore the comparative functional for the soliton branch approaches a finite limit:
\begin{equation}
\mathcal{F}_{\mathrm{sol}}(\lambda) = 2\ln\Phi_{\mathrm{avg}}(1) - \lambda J_{\mathrm{sol}} + C, \label{eq:F_sol_finite}
\end{equation}
where $J_{\mathrm{sol}} \equiv J(1) = 8\pi\Psi_3(1)$. The apparent divergence of $\mathrm{K}(m)$ is cancelled by the vanishing of $\Psi_2(m)$, ensuring that the total intensity remains finite as required for a physical soliton solution.

\paragraph{Coexistence condition and the lower critical point.}
At coexistence, the periodon and soliton branches have equal comparative functional values. Using the finite limiting values, we have:
\begin{equation}
2\ln\sqrt{\Psi_2(0)} - \lambda_{c,\mathrm{down}} J_{\mathrm{cn}}
= 2\ln\Phi_{\mathrm{avg}}(1) - \lambda_{c,\mathrm{down}} J_{\mathrm{sol}}.
\label{eq:coexistence_finite}
\end{equation}
Solving for $\lambda_{c,\mathrm{down}}$ yields:
\begin{equation}
\lambda_{c,\mathrm{down}}
= \frac{ 2\left[ \ln\sqrt{\Psi_2(0)} - \ln\Phi_{\mathrm{avg}}(1) \right] }
       { J_{\mathrm{cn}} - J_{\mathrm{sol}} } .
\label{eq:lambda_down_corrected_finite}
\end{equation}
This expression remains well-defined as both numerator and denominator are finite.

\paragraph{Regimes of the hysteresis loop.}
The contact‑geometric hysteresis loop divides the $(\lambda,J)$ parameter space into three distinct regimes:
\begin{enumerate}[label=(\arabic*)]
    \item \textbf{Ascending branch} ($\lambda > \lambda_{c,\mathrm{up}}$): The probability landscape is convex, and the periodon phase is globally stable ($J < J_c$).
    \item \textbf{Descending branch} ($\lambda < \lambda_{c,\mathrm{down}}$): The probability landscape is concave, and the soliton phase is globally stable ($J > J_c$).
    \item \textbf{Bistable region} ($\lambda_{c,\mathrm{down}} < \lambda < \lambda_{c,\mathrm{up}}$): Both phases are locally stable. The system’s state is history‑dependent, a direct consequence of the geometric energy barrier between the two distinct contact structures.
\end{enumerate}

It is worth mentioning that experimental observations of~\cite{JIANG2025} provide qualitative It is worth noting that experimental observations from~\cite{JIANG2025} provide qualitative support for this study, while the Ginzburg--Landau theory framework summarised in~\cite{Hohenberg2015} offers fundamental theoretical support. evidence of this study, while the Ginzburg--Landau theory framework summarised in~\cite{Hohenberg2015} offers fundamental theoretical support. 

Experiments of~\cite{JIANG2025} showed the fiber laser features a critical PDL threshold (1.5 dB), analogous to the essential Lagrange multiplier $\lambda_c$ in the 2D CGLE phase transition theory: below this threshold, the system favors quasi-ordered periodic structures (2.2 ns-period cnoidal waves) or coexistence of solitons with periodic backgrounds (observed at PDL=0.8 dB), while above the threshold, it collapses into localized, stable solitons with a sech²-shaped autocorrelation profile. Notably, the transition between these phases exhibits first-order phase transition behavior with an energy barrier, as predicted by the 2D CGLE framework and consistent with the first-order phase transition characteristics of complex Ginzburg–Landau systems highlighted in~\cite{Hohenberg2015}: when PDL increases from 0.8 dB (periodic wave phase) to 1.5 dB, the system must overcome the periodic wave background energy barrier to enter the pure soliton phase, corresponding to the theoretical forward transition $\lambda\to\lambda_+$; conversely, when PDL decreases from 3.5 dB (stable soliton phase) to below 1.5 dB, the system does not revert immediately to periodic waves but instead enters a soliton metastable state (matching the theoretical reverse transition $\lambda\to\lambda_-$), only recovering periodic waveforms when PDL drops well below the 1.5 dB threshold. This transition is abrupt, with no stable intermediate states, mirroring the discontinuous jump of the order parameter (elliptic modulus $m$) from $m_c$ to $1^-$ in the 2D CGLE—a hallmark of first-order phase transitions that~\cite{Hohenberg2015} identifies as a universal feature of CGLE-driven non-equilibrium pattern formation. Additionally, the soliton-cnoidal wave coexistence at sub-threshold PDL echoes the coexistence region near $\lambda_c$ in the 2D CGLE, while the metastable interval and discontinuous switching serve as experimental manifestations of the theoretical hysteresis loop for periodon-soliton first-order phase transition, the phenomena that~\cite{Hohenberg2015} attributes to the intrinsic nonlinear and dissipative balance of CGLE systems.

Furthermore,~\cite{Hohenberg2015} confirms that CGLE is a paradigmatic model for describing non-equilibrium phase transitions with ordered structure formation, which validates the 2D CGLE as a reasonable carrier for the phase transition study in this work. Its conclusion on the universality of CGLE in capturing transitions between periodic patterns and localised solitons further underscores the broader significance of our contact-geometric framework. Therefore, these experimental observations and the theoretical foundation provided by~\cite{Hohenberg2015} not only validate the core predictions of the 2D CGLE phase transition theory but also underscore its universality in delineating ordered structure transitions across dissipative nonlinear systems.

\section{Concluding Remarks}
\label{sec:Conclusions}
\begin{enumerate}[label=(\arabic*)]
    \item A unified contact-geometric framework for dissipative field theories is developed, founded on two main theorems: a Least Constraint Theorem extended to complex fields and a theorem linking contact geometry to probability measures.
    \item Applying the framework to the 2D Complex Ginzburg-Landau Equation (CGLE) yields its dissipative contact dynamics and a corresponding Contact Hamilton-Jacobi (CHJ) equation, demonstrating its ability in concrete analysis.
    \item Through canonical transformation and travelling-wave reduction, exact solutions of the CHJ equation are obtained, explicitly showing a continuous transition from periodic cnoidal waves to localised solitons as the elliptic modulus $ m \to 1^{-} $.
    \item The conserved contact potential, rather than energy, is identified as the key geometric quantity governing pattern formation in dissipative media, providing a new foundation for stability analysis.
   \item A geometric probability measure, derived from the contact structure, reveals a first-order statistical phase transition with a sharp switching line and hysteresis. It results from projecting high-dimensional contact information from configuration space onto physical space, with statistical weight encoded in the action functional.
    \item While demonstrated for the CGLE, the theorems are expected to provide a fundamental framework for analysing pattern selection and phase transitions in a broad class of nonlinear dissipative systems.
\end{enumerate}

\appendix
 
\section{Solution of Contact Hamilton-Jacobi Equation}
\label{sec:Solution-Chj-Red-Corrected}

The 1D CHJ equation is given by:
\begin{equation}\label{eq:1D-dHJ-final}
-\frac{1}{4}(\Phi'')^2 + \frac{1}{2}\frac{\Phi''(\Phi')^2}{\Phi} - k_x (\Phi')^2 + A\Phi^2 + B \Phi^4 = 0,
\end{equation}
where:
\begin{enumerate}[label=(\arabic*)]
    \item $\Phi(y) \in \mathbb{R}$ is the unknown function, with normalization condition $\int_{-\infty}^{\infty}\Phi^2(y)\dd y = 1$;
    \item $k_x$ is the fixed wavenumber in the $x$-direction;
    \item $A = k_x^2 - \mu - \omega\in \mathbb{R}$, $B = \tfrac{J}{2\pi} \in \mathbb{R}$ are linear/nonlinear coefficients, respectively (with $B>0$ for physical nonlinear interactions);
    \item $\Phi' = \dd\Phi/\dd y$ and $\Phi'' = \dd^2\Phi/\dd y^2$ denote first and second derivatives with respect to $y$.
\end{enumerate}

\subsection*{Step 1: Reduction to First-Order ODE}
Let $p = \Phi'$ (i.e., $\dd\Phi/\dd y = p$). By the chain rule, the second derivative can be rewritten as:
\begin{equation}
\Phi'' = \frac{\dd p}{\dd y} = \frac{\dd p}{\dd\Phi} \cdot \frac{\dd\Phi}{\dd y} = p\frac{\dd p}{\dd\Phi}.
\label{eq:appendix_chain_rule}
\end{equation}
Substitute Eq.~\eqref{eq:appendix_chain_rule} into Eq.~\eqref{eq:1D-dHJ-final} (valid for $\Phi \neq 0$):
\begin{equation}
-\frac{1}{4}\left(p\frac{\dd p}{\dd\Phi}\right)^2 + \frac{1}{2}\frac{p^3}{\Phi}\frac{\dd p}{\dd\Phi} - k_x p^2 + A\Phi^2 + B\Phi^4 = 0.
\label{eq:appendix_ode_p}
\end{equation}
To eliminate the nonlinear derivative term, introduce the substitution $u = p^2 = (\Phi')^2$. Differentiating $u$ with respect to $\Phi$ gives:
\begin{equation}
\frac{\dd u}{\dd\Phi} = 2p\frac{\dd p}{\dd\Phi} \implies p\frac{\dd p}{\dd\Phi} = \frac{1}{2}\frac{\dd u}{\dd\Phi}.
\label{eq:appendix_u_substitution}
\end{equation}
Substitute Eq. \eqref{eq:appendix_u_substitution} into Eq.~\eqref{eq:appendix_ode_p}:
\begin{equation}
-\frac{1}{16}\left(\frac{\dd u}{\dd\Phi}\right)^2 + \frac{1}{4}\frac{u}{\Phi}\frac{\dd u}{\dd\Phi} - k_x u + A\Phi^2 + B\Phi^4 = 0.
\label{eq:appendix_ode_u}
\end{equation}
Rearrange Eq.~\eqref{eq:appendix_ode_u} into a quadratic equation for $\dd u/\dd\Phi$ by multiplying through by $-16$:
\begin{equation}
\left(\frac{\dd u}{\dd\Phi}\right)^2 - \left(4\frac{u}{\Phi}\right)\frac{\dd u}{\dd\Phi} + \left(16k_x u - 16A\Phi^2 - 16B\Phi^4\right) = 0.
\label{eq:appendix_quadratic_du}
\end{equation}

\subsection*{Step 2: Solve the First-Order ODE}
Solve Eq.~\eqref{eq:appendix_quadratic_du} for $\dd u/\dd\Phi$ using the quadratic formula. The discriminant is:
\begin{equation}
\Delta_{\text{quad}} = 16\frac{u^2}{\Phi^2} - 64k_x u + 64A\Phi^2 + 64B\Phi^4.
\end{equation}
Taking the positive root (consistent with bounded physical solutions):
\begin{equation}
\frac{\dd u}{\dd\Phi} = 2\frac{u}{\Phi} + 2\sqrt{\frac{u^2}{\Phi^2} - 4k_x u + 4A\Phi^2 + 4B\Phi^4}.
\label{eq:appendix_du_solution_corrected}
\end{equation}

Introduce a second substitution: $\Psi = \Phi^2$ (i.e., $\Phi = \sqrt{\Psi}$ and $\dd\Phi = \dd\Psi/(2\sqrt{\Psi})$). Since $u = (\Phi')^2$, rewrite $u$ in terms of $\Psi$ using the chain rule:
\begin{equation}
u = \left(\frac{\dd\Phi}{\dd y}\right)^2 = \left(\frac{\dd\Phi}{\dd\Psi}\frac{\dd\Psi}{\dd y}\right)^2 = \left(\frac{1}{2\sqrt{\Psi}}\frac{\dd\Psi}{\dd y}\right)^2 = \frac{1}{4\Psi}\left(\frac{\dd\Psi}{\dd y}\right)^2.
\label{eq:appendix_u_psi}
\end{equation}
Thus, $\frac{\dd\Psi}{\dd y} = \pm 2\sqrt{\Psi u}$. Substitute Eq.~\eqref{eq:appendix_u_psi} and $\Phi = \sqrt{\Psi}$ into Eq.~\eqref{eq:appendix_du_solution_corrected}. After simplifying the nested radicals and using $\frac{\dd u}{\dd\Phi} = 2\sqrt{\Psi}\frac{\dd u}{\dd\Psi}$, the equation reduces to:
\begin{equation}
\left(\frac{\dd\Psi}{\dd y}\right)^2 = 4B\Psi^3 + 4A\Psi^2 - 4k_x\Psi.
\label{eq:appendix_elliptic_eq_corrected}
\end{equation}

\subsection*{Step 3: Elliptic Integral to Jacobian Elliptic Cosine Function}
Equation \eqref{eq:appendix_elliptic_eq_corrected} is the canonical form of an elliptic integral. Assume the cubic polynomial $P(\Psi) = 4B\Psi^3 + 4A\Psi^2 - 4k_x\Psi$ has three real roots ordered as $\Psi_1 < \Psi_2 < \Psi_3$ with $\Psi_2 \geq 0$ (ensuring $\Psi = \Phi^2 \geq 0$ for physical solutions). Factor $P(\Psi)$ into the standard form:
\begin{equation}
P(\Psi) = 4B(\Psi_3 - \Psi)(\Psi - \Psi_2)(\Psi - \Psi_1).
\label{eq:appendix_p_psi_factor_corrected}
\end{equation}

Separate variables and integrate both sides of Eq. \eqref{eq:appendix_elliptic_eq_corrected}:
\begin{equation}
\int_{\Psi_2}^{\Psi} \frac{\dd\Psi}{\sqrt{4B(\Psi_3 - \Psi)(\Psi - \Psi_2)(\Psi - \Psi_1)}} = \pm y + C_0,
\label{eq:appendix_elliptic_integral_corrected}
\end{equation}
where $C_0$ is the constant of integration. Introduce the substitution $z^2 = \frac{\Psi - \Psi_2}{\Psi_3 - \Psi_2}$, so that:
\begin{align}
\Psi &= \Psi_2 + (\Psi_3 - \Psi_2)z^2, \\
\dd\Psi &= 2(\Psi_3 - \Psi_2)z \dd z.
\end{align}
Substituting into Eq. \eqref{eq:appendix_elliptic_integral_corrected} and absorbing constants into $C_0$ and selecting the positive sign for physical consistency yields:
\begin{equation}
\int_{0}^{z} \frac{\dd z}{\sqrt{(1 - z^2)\left[1 - m z^2\right]}} = \sqrt{B(\Psi_3 - \Psi_1)} \, y + C_1,
\end{equation}
define the modular parameter $m$, which satisfies $0 < m < 1$ for elliptic cosine solutions, as:
\begin{equation}
m = \frac{\Psi_3 - \Psi_2}{\Psi_3 - \Psi_1}.
\end{equation}
The integral is the standard Legendre form of the first-kind elliptic integral, whose inverse is the Jacobian elliptic cosine function $\text{cn}(z, m)$. Thus:
\begin{equation}
z = \text{cn}\left(\sqrt{B(\Psi_3 - \Psi_1)} \, y + C_1, m\right).
\end{equation}

For even solutions symmetric about $y=0$ (maximum at $y=0$), set $C_1 = 0$. Recalling $\Psi = \Phi^2 = \Psi_2 + (\Psi_3 - \Psi_2)z^2$, the physically bounded cnoidal wave solution is:
\begin{equation}
\Phi(y) = \sqrt{\Psi_2 + (\Psi_3 - \Psi_2) \cdot \text{cn}^2\left(\tilde{\alpha} y, m\right)},
\label{eq:cnoidal_solution}
\end{equation}
where $\tilde{\alpha} = \sqrt{B(\Psi_3 - \Psi_1)}$. This solution oscillates between $\sqrt{\Psi_2}$ (minimum) and $\sqrt{\Psi_3}$ (maximum), consistent with $\Psi = \Phi^2 \geq 0$.

\subsection*{Step 4: Parameter Constraints via Vieta's Formulas}
The roots $\Psi_1, \Psi_2, \Psi_3$ of the cubic polynomial $P(\Psi) = 4B\Psi^3 + 4A\Psi^2 - 4k_x\Psi$ satisfy Vieta's formulas:
\begin{align}
\Psi_1 + \Psi_2 + \Psi_3 &= -\frac{A}{B}, \label{eq:vieta1_corrected} \\
\Psi_1\Psi_2 + \Psi_1\Psi_3 + \Psi_2\Psi_3 &= -\frac{k_x}{B}, \label{eq:vieta2_corrected} \\
\Psi_1\Psi_2\Psi_3 &= 0. \label{eq:vieta3_corrected}
\end{align}
These formulas relate the polynomial roots to the physical parameters $k_x, A, B$, with the normalization condition $\int_{-\infty}^{\infty}\Phi^2(y)\dd y = 1$ used to fix remaining free parameters.

\subsection*{Step 5: Special Limit -- Bright Soliton ($m \to 1$)}
As $m \to 1$, $\Psi_2 \to \Psi_3$, and the Jacobian elliptic functions degenerate to hyperbolic functions (preserving physical boundedness):
\[
\text{cn}(z, 1) = \text{sech}(z), \quad \text{sn}(z, 1) = \tanh(z).
\]
The cnoidal wave solution reduces to the bright soliton form:
\[
\Phi(y) = \sqrt{\Psi_2 + (\Psi_3 - \Psi_2) \cdot \text{sech}^2\left(\tilde{\alpha} y\right)}.
\]

For zero-background bright solitons ($\Psi_2 = 0$, corresponding to $\Phi \to 0$ as $|y| \to \infty$), Eq. \eqref{eq:vieta3_corrected} is naturally satisfied, and the solution simplifies to:
\begin{equation}
\Phi_{\text{sol}}(y) = \sqrt{\Psi_3} \cdot \text{sech}\left(\tilde{\alpha} y\right),
\label{eq:soliton_final}
\end{equation}
where $\tilde{\alpha} = \sqrt{B(\Psi_3 - \Psi_1)}$. For $\Psi_1 = 0$, $\tilde{\alpha} = \sqrt{B\Psi_3}$.

\subsection*{Step 6: Normalised Bright Soliton}
Impose the normalization condition $\int_{-\infty}^{\infty}\Phi^2(y)\dd y = 1$ on Eq. \eqref{eq:soliton_final} (with $\Psi_1 = 0$):
\[
\int_{-\infty}^{\infty} \Psi_3\,\text{sech}^2(\tilde{\alpha}y)\dd y = \frac{2\Psi_3}{\tilde{\alpha}} = 1.
\]
Solve for $\Psi_3$: $\Psi_3 = \tilde{\alpha}/2$. Substitute into $\tilde{\alpha} = \sqrt{B\Psi_3}$:
\[
\tilde{\alpha} = \sqrt{B \cdot \frac{\tilde{\alpha}}{2}} \implies \tilde{\alpha} = \frac{B}{2}.
\]
Thus, $\Psi_3 = B/4$, and the normalized bright soliton solution is:
\begin{equation}
\Phi(y) = \frac{\sqrt{B}}{2}\,\text{sech}\!\left(\frac{B}{2}y\right),
\label{eq:normalized_soliton_final}
\end{equation}
which satisfies both the original CHJ equation and the normalisation condition.

\subsection*{Step 7: Alternative Solution via Weierstrass $\wp$-Function}
As a complementary approach to the Jacobian elliptic function method presented in Steps~3--5, Eq.~\eqref{eq:appendix_elliptic_eq_corrected} admits an equivalent formulation in terms of the Weierstrass $\wp$-function. This representation is particularly advantageous for analyzing the complex structure of the solution manifold and provides a direct connection to the algebraic geometry of elliptic curves.

Starting from the canonical elliptic equation~\eqref{eq:appendix_elliptic_eq_corrected} with $\Psi_1=0$:
\begin{equation}\label{eq:weierstrass_start}
\left(\frac{\dd\Psi}{\dd y}\right)^2 = 4B\Psi(\Psi-\Psi_2)(\Psi-\Psi_3),
\end{equation}
we perform a linear transformation to eliminate the quadratic term, casting the cubic into the Weierstrass normal form. Let
\begin{equation}\label{eq:weierstrass_substitution}
\Psi(y) = \frac{1}{B}\left[\zeta(y) - \frac{A}{3}\right],
\end{equation}
where $A = -B(\Psi_2+\Psi_3)$ follows from Vieta's formula~\eqref{eq:vieta1_corrected}. Substituting Eq.~\eqref{eq:weierstrass_substitution} into Eq.~\eqref{eq:weierstrass_start} and utilizing the identity $4B\Psi^3 + 4A\Psi^2 - 4k_x\Psi = 4B\Psi(\Psi-\Psi_2)(\Psi-\Psi_3)$, we obtain the standard Weierstrass differential equation:
\begin{equation}\label{eq:weierstrass_standard}
\left(\frac{\dd\zeta}{\dd y}\right)^2 = 4\zeta^3 - g_2\zeta - g_3,
\end{equation}
with the invariants $g_2$ and $g_3$ given by
\begin{align}
g_2 &= \frac{4}{3}(A^2 + 3Bk_x), \label{eq:invariant_g2} \\
g_3 &= \frac{4}{27}(-2A^3 - 9ABk_x). \label{eq:invariant_g3}
\end{align}
The roots $e_1, e_2, e_3$ of the cubic polynomial $4\zeta^3 - g_2\zeta - g_3$ are related to the physical parameters via
\begin{equation}\label{eq:weierstrass_roots}
e_1 = B\Psi_3 + \frac{A}{3}, \quad e_2 = B\Psi_2 + \frac{A}{3}, \quad e_3 = \frac{A}{3},
\end{equation}
satisfying $e_1 + e_2 + e_3 = 0$ and $e_1 > e_2 > e_3$ for $B>0$ and $\Psi_3 > \Psi_2 > 0$.

The general solution to Eq.~\eqref{eq:weierstrass_standard} is expressed in terms of the Weierstrass $\wp$-function with invariants $g_2, g_3$:
\begin{equation}\label{eq:weierstrass_solution_zeta}
\zeta(y) = \wp\!\left(\sqrt{B}\,y + C_0; g_2, g_3\right),
\end{equation}
where $C_0$ is the integration constant determined by the initial condition $\Psi(0)=\Psi_3$ (maximum amplitude). Consequently, the real amplitude $\Phi(y)$ assumes the form
\begin{equation}\label{eq:weierstrass_solution_phi}
\Phi(y) = \sqrt{\frac{1}{B}\left[\wp\!\left(\sqrt{B}\,y + C_0; g_2, g_3\right) - \frac{A}{3}\right]}.
\end{equation}

The Weierstrass solution~\eqref{eq:weierstrass_solution_phi} and the Jacobian solution~\eqref{eq:cnoidal_solution} represent the same physical trajectory on the elliptic curve, related by a modular transformation. To establish their equivalence explicitly, we employ the fundamental identity connecting $\wp$ and the Jacobian elliptic cosine $\mathrm{cn}$ when the cubic has three real roots:
\begin{equation}\label{eq:wp_to_cn_identity}
\wp(z; g_2, g_3) = e_3 + \frac{e_1 - e_3}{\mathrm{sn}^2\left(z\sqrt{e_1-e_3}, \tilde{m}\right)},
\end{equation}
where the complementary modulus $\tilde{m}$ is defined as
\begin{equation}\label{eq:modulus_relation}
\tilde{m} = \frac{e_2 - e_3}{e_1 - e_3} = \frac{\Psi_2}{\Psi_3}.
\end{equation}
Using the identity $\mathrm{sn}^2(u,\tilde{m}) + \mathrm{cn}^2(u,\tilde{m}) = 1$ and the relation between $\tilde{m}$ and the modulus $m$ of Eq.~\eqref{eq:cnoidal_solution} (where $m = 1 - \tilde{m} = \frac{\Psi_3-\Psi_2}{\Psi_3}$), we transform Eq.~\eqref{eq:wp_to_cn_identity} into
\begin{equation}\label{eq:wp_to_cn_transformed}
\wp(z) - \frac{A}{3} = B\left[\Psi_2 + (\Psi_3-\Psi_2)\mathrm{cn}^2\left(z\sqrt{B\Psi_3}, m\right)\right].
\end{equation}
Substituting $z = \sqrt{B}\,y$ into Eq.~\eqref{eq:wp_to_cn_transformed} and comparing with Eq.~\eqref{eq:weierstrass_solution_phi}, we recover precisely the Jacobian solution~\eqref{eq:cnoidal_solution} with $\tilde{\alpha} = \sqrt{B(\Psi_3-\Psi_1)} = \sqrt{B\Psi_3}$ (since $\Psi_1=0$). This confirms that
\begin{equation}\label{eq:equivalence_relation}
\Phi_{\text{Jacobi}}(y) = \Phi_{\text{Weierstrass}}(y),
\end{equation}
under the parameter mapping given by Eqs.~\eqref{eq:weierstrass_roots} and \eqref{eq:modulus_relation}.

In the soliton limit $m \to 1^{-}$ (corresponding to $\Psi_2 \to 0$ and $e_2 \to e_3$), the lattice of the $\wp$-function degenerates such that $\wp(z) \to e_3 + (e_1-e_3)\mathrm{csch}^2\left(z\sqrt{e_1-e_3}\right)$. Using $\mathrm{csch}^2(u) = \mathrm{sech}^2(u)$ for imaginary arguments or directly taking the limit $\tilde{m} \to 0$, Eq.~\eqref{eq:weierstrass_solution_phi} reduces to the normalized bright soliton~\eqref{eq:normalized_soliton_final}, thereby verifying the consistency of both approaches in the integrable and non-periodic regimes.

Thus, the Weierstrass formulation provides an intrinsic geometric characterization of the solution space, while the Jacobian representation offers a computationally convenient form for physical applications. The equivalence demonstrated above ensures that the contact-geometric framework developed in the main text is independent of the specific parametrization of elliptic functions. 

\bibliography{bib-Contact-2025.bib} 
\end{document}